\renewcommand{\nu}{\vec{\mathbf{n}}}
\newcommand{\RR}{\mathbb{R}}
\newcommand{\PP}{\mathbb{P}}
\newtheorem{proposition}{Proposition}
\newtheorem{corollary}{Corollary}
\newtheorem{lemma}{Lemma}
\newtheorem{theorem}{Theorem}
\title{NonSmooth Convex Optimization to Estimate the Covid19 Reproduction Number Space-Time Evolution with Robustness against Outliers}
\author{Barbara Pascal\footnotemark[1] \and Patrice Abry\footnotemark[2] \and
Nelly Pustelnik\footnotemark[2] \and St\'ephane Roux\footnotemark[2] \and R\'emi Gribonval\footnotemark[3] \and
Patrick Flandrin\footnotemark[2]}
\renewcommand{\thefootnote}{\fnsymbol{footnote}}
\begin{document}

\maketitle

\footnotetext[1]{B. Pascal is with Univ. Lille, CNRS, Centrale Lille, UMR 9189 CRIStAL, F-59000 Lille, France (e-mail: barbara.pascal@univ-lille.fr).}
\footnotetext[2]{P. Abry, N. Pustelnik, S. Roux, P. Flandrin are with CNRS, ENS de Lyon, Laboratoire de Physique, Lyon, France  (e-mail: firstname.lastname@ens-lyon.fr).}
\footnotetext[3]{R. Gribonval is with INRIA, ENS de Lyon, Laboratoire d'Informatique.}

\begin{abstract}
Daily pandemic surveillance, often achieved through the estimation of the reproduction number, constitutes a critical challenge for national health authorities to design counter-measures. 
In an earlier work, we proposed to formulate the estimation of the reproduction number as an optimization problem, combining data-model fidelity and space-time regularity constraints, solved by nonsmooth convex proximal minimizations. 
Though promising, that first formulation significantly lacks robustness against the Covid-19 data low quality (irrelevant or missing counts, pseudo-seasonalities,\ldots) stemming from the emergency and crisis context, which significantly impairs accurate pandemic evolution assessments. 
The present work aims to overcome these limitations by carefully crafting a functional permitting to estimate jointly, in a single step, the reproduction number and outliers defined to model low quality data.
This functional also enforces epidemiology-driven regularity properties for the reproduction number estimates, while preserving convexity, thus permitting the design of efficient minimization algorithms, based on proximity operators that are derived analytically. 
The explicit convergence of the proposed algorithm is proven theoretically. 
Its relevance is quantified on real Covid-19 data, consisting of daily new infection counts for 200+ countries and for the 96 metropolitan France counties, publicly available at Johns Hopkins University and Sant\'e-Publique-France. 
The procedure permits automated daily updates of these estimates, reported via animated and interactive maps. 
Open-source estimation procedures will be made publicly available..
\end{abstract}

\renewcommand{\thefootnote}{\arabic{footnote}}

\section{Introduction}
\noindent {\bf Context.} 
The ongoing COVID-19 pandemic has produced an unprecedented health and economic crisis, urging for the construction of efficient monitoring procedures of its spreading across territories, a crucial step in designing  sanitary, social and economic policies by national authorities \cite{flahault2020covid}. 
It is however often not the value of the infection level per se that matters to design pandemic counter-measures (lockdown,\ldots), but its evolution along time and variations across territories. 
However, in the context of the Covid-19 pandemic outburst and of the sanitary crisis, collecting daily new infection counts, the basis material in any pandemic surveillance strategy, results in low-quality data (missing samples, outliers, seasonalities\ldots).
Surprisingly, after 18 months of pandemic, the quality of the data collected and made available by most national health authorities in the world remains limited, which severely impairs an accurate and timely pandemic evolution assessment, the issue at the heart of the present work. \\
\noindent {\bf Related works.} Pandemic surveillance has been envisaged with several categories of tools from different fields of sciences (cf. \cite{arino2021describing} for a review). 
Yet, it is classically performed using {\it compartmental models}, elaborating on the classical \emph{Susceptible-Infectious-Recovered} reference.
For realistic use and to match social realities (social groups, contact, \ldots), these models need to be refined (cf. e.g., \cite{Liu2018,BCF}), essentially by increasing the number of compartments, which  implies increasing (in a quadratic way) the numbers of parameters to be estimated.
Parameter estimation is usually achieved within Bayesian frameworks, maximizing the likelihood attached to the models, 
at the expense of heavy computational burdens. 
Such models are thus often used a posteriori (i.e., after the epidemic) with consolidated and accurate datasets.
The low quality of the Covid-19 data collected by most national public health authorities across the world significantly impairs the use of such Bayesian schemes and thus the reliable estimations of these parameters.
This thus precludes the use of such models for daily basis update of the pandemic evolution assessment. 

Instead of compartmental models, epidemiologists massively use the so-called {\it reproduction number}, $R$, that measures how many new individuals are on average infected by an already contaminated person, a key marker of the pandemic strength (cf. e.g., \cite{Diekmann1990,wallinga2004,vandenDriessche2002,obadia2012,cori2013new}).
Estimated to around $3$ during the outburst of the Covid-19 pandemic \cite{DiDomenico2020,salje}, it is also used as a function of  time $R_T$ to monitor the temporal evolution of the pandemic. 
The strength of such a pandemic monitoring is to be based on a single parameter to estimate $R$, while accounting for the basic and universal pandemic propagation mechanisms:  
The number of new infections, $Z_T$, at day $T$, depends on $R_T$ and on the numbers of new infections measured at previous days, $\{Z_{T-1}, Z_{T-2}, Z_{T-3},\ldots\} $ weighted by the so-called  \emph{serial interval function} $\Phi(t)$.
The latter quantifies the probability that someone found Covid-positive today was actually contaminated several days ago \cite{cori2013new,obadia2012,thompson2019,Liu2018}.
Following \cite{cori2013new}, $R_T$ can be estimated using a Bayesian scheme\footnote{\href{https://shiny.dide.imperial.ac.uk/epiestim/}{https://shiny.dide.imperial.ac.uk/epiestim/}}. 
Instead, in an earlier work \cite{abry2020spatial}, we proposed an estimation procedure of the spatio-temporal evolution of the reproduction number, written as a nonsmooth convex optimization problem, combining data-model fidelity with time and space regularity constraints for the estimates of $R$ to be pandemic realistic, and solved by proximal algorithms \cite{Bauschke:2011ta,Cai_JF_2012_j-ams_ima_rtv,Combettes2011,parikh2014proximal,pustelnik2012_j-ieee-tsp_sur_ads}. 
Though promising, that proximal-optimization based estimation significantly lacks robustness against the low quality of the Covid-19 data, a generic and crucial issue addressed in the present contribution. \\
\noindent {\bf Goals, contributions and outline.} The overall goal of this work is to devise an inverse problem-type convex-optimization procedure to assess the spatio-temporal evolution of  the reproduction number $R$, from daily new infection counts, measured simultaneously on several connected territories, such as the different counties of a same country. 
The proposed estimation procedure is designed to be: i)  robust to the poor quality of Covid-19 data~; ii) efficiently applicable \emph{online} to update estimates, on a daily basis or whenever new data are available, at moderate computational costs even for large datasets. 

To that end, Section~\ref{sec:model} recalls the pandemic model used here \cite{cori2013new}, frames its classical estimation and related issues. 
Section~\ref{sec:estim} describes the core methodological contributions of the paper: 
i) It discusses the poor quality of the data and proposes an extension of the original pandemic modelling that handles poor quality data as \emph{outliers} (cf. Section~\ref{sec-outlier})~;
ii) It proposes an original and theoretical design of a functional whose minimization yields in a single step a joint estimation of both the reproduction number and the outliers, by combining the log-likelihood of the extended model as a data fidelity term with temporal and spatial regularity constraints for epidemiology-realistic estimates of $R$ and sparsity constraints for the outliers (cf. Section~\ref{ssec:multiple}).
This functional is carefully crafted to preserve convexity so that fast and efficient proximal algorithms can be devised for minimization~; 
iii) The theoretical properties of the functional and of its solution (existence, unicity) are studied theoretically in Section~\ref{ssec:convex}~;
iv) This permits to devise a proximal type iterative algorithm, with explicit analytical derivation of the proximal operators, whose convergence and stopping criterion are discussed theoretically and practically (cf. Section~\ref{ssec:min_prox}). 

The relevance  of the proposed estimation procedure to assess the spatio-temporal evolution of the pandemic is illustrated on real Covid-19 data (daily counts of new infections) made available by national public health authorities (such as e.g.,  \emph{Sant\'e-Publique-France}, SPF) or collected by Johns Hopkins University, described in Section~\ref{sec:data}.

Estimation performance are reported in Section~\ref{sec:results}:
i) First, the role and choices of the hyperparameters balancing the regularization terms and the data-model fidelity term are discussed in detail (cf. Section~\ref{sec:hyperparam})~; 
ii) Second, the benefits of the proposed strategy to estimate the temporal evolution of $R_T$ at the level of a given country are illustrated by comparisons against different estimation strategies, such as a two-step procedure (outliers estimated and removed first, followed by estimation of $R$)  (cf. Section~\ref{sec:country})~;
iii) Third,  the relevance of the space-time evolution of $R$ estimated across territories that are \emph{connected} (shared borders, massive population commutations,\ldots), here the 96 \emph{d\'epartements} (counties) that organize administratively metropolitan France), is illustrated and discussed precisely (cf. Section~\ref{sec:france}). 

The proposed procedure is efficient enough to permit the automated daily updates of the estimates of $R_T$ for 200+ countries, for  the 96 continental French \emph{d\'epartements} of France and for the 50 US states. 
Links to the resulting animated and interactive maps of estimates are provided and open-source estimation procedures will be made publicly available. 

\section{Pandemic Model and Estimation}
\label{sec:model}
\subsection{Model} 

The pandemic model used here, focused on the reproduction number $R$, was proposed in \cite{cori2013new} and relies on two key ideas: 
i) Conditionally to past counts $Z_{1:T-1} \triangleq  \{ Z_1,\ldots,Z_{T-1} \}$, the count of daily new infections, $Z_T$, is modeled as a random variable drawn from a Poisson distribution $\mathrm{Poiss}(p_T)$, 
ii) whose parameter $p_T$ depends on the past counts $Z_{1:T-1}$, on the causal \emph{serial interval} function $ \Phi_T$ and on the current $R_T$:
\begin{equation}
\label{equ:cori}
\PP \left(Z_T \vert Z_{1:T-1}; R_T\right)
\sim \mathrm{Poiss}\left( p_T \triangleq  R_T \times \sum_{t = 1}^{\tau_\phi} \Phi_t
Z_{T-t}  \right)
\end{equation}

\noindent The serial interval function $\Phi_t$ is a key element of the model that accounts for the epidemiology evolution mechanisms, as it models the random delays between the onset of symptoms in a primary case and the onset of symptoms in secondary cases, 
\cite{cori2013new,obadia2012,thompson2019,Liu2018}.
For the Covid-19 pandemic and for earlier pandemics of same types, it was shown that  $\Phi$ can be approximated as a Gamma function, with shape and rate parameters 1.87 and 0.28, respectively, corresponding to mean and standard deviations of 6.6 and 3.5 days, indicating a high risk of infecting other persons from 3 to 10 days after the symptoms have appeared  \cite{shujuan,Riccardo2020,Guzzetta}.  
It is assumed here that $\Phi$ is known and follows this Gamma approximation.

\subsection{Estimation} 

Given the  analytic expression of the Poisson distribution in Model~\eqref{equ:cori}, the log-likelihood $\ln \PP \left(Z_T \vert Z_{1:T-1}; R_{1:T}\right) =  Z_T \ln p_T - p_T - \ln Z_T~! $, permits to compute 
 the Maximum-Likelihood Estimate, 
\begin{equation}
\label{eq:CoriMLE}
\hat R^{\text{MLE}}_T = {\mathrm{argmin}} (- \ln \PP \left(Z_T \vert Z_{1:T-1}; R_{1:T}\right)),
\end{equation}
in closed-form expression as: $\hat R^{\text{MLE}}_T  =   {Z_T/ \sum_{t\geq 1} \Phi_t Z_{T - t}}$.
It can be interpreted as a ratio of moving averages, whose size and shape are consistent with pandemic modeling (thus with $\Phi_T)$.
The $\hat R^{\text{MLE}}_T $ computed from real Covid-19 data (cf. Section~\ref{sec:data}) are displayed in Figs.~\ref{fig:france} and ~\ref{fig:countries} (middle plots) and display a far too large variability along time to be of any practical use in actual pandemic monitoring.  
Such variability mostly stems from the low quality of the data.
The goal of the present work is thus to improve the estimation of $R$ despite the low quality of the available Covid-19 data.

\section{Robust estimation: Methodology}
\label{sec:estim}
 \subsection{Covid-19 data low quality and multivariate infection counts} 
 \label{sec-outlier}

Covid-19 daily new infection counts made available by public health authorities are for most countries severely corrupted, 
with missing samples, non meaningful negative counts followed by retrospective cumulated counts spread across the following days,\ldots.
They also show pseudo-seasonality effects, with significantly less counts on non working days. 
Seasonality are however uneasy to model, as non working days are not only week-ends but also additional day-offs that depend on countries.
Also, the way the pandemic has been monitored along week-ends and non-working days has significantly changed along time, even within a given country. 
Therefore, rather than trying to devise necessarily ad-hoc and parametric models for noise in Covid-19 data, we have instead chosen to refer to these different types of \emph{data corruption} under the generic term of \emph{outlier}, denoted  $O_T$, and we propose to model observed daily new infection counts $Z_T$ still from a Poisson distribution conditionally to past counts $Z_{1:T-1}$, yet with a Poisson parameter $p_T$ 
that depends both on the current reproduction number $R_{T}$ and current \textit{outliers} $O_{T}$. 

Additionally, when new infection counts are monitored \textit{simultaneously} for several connected territories, e.g., the counties or states 
of a given country, the observed daily counts consist of $D$ time series $\lbrace Z^{(d)}_t, \, t \in [1, T], \,  d \in [1, D] \rbrace$, with $[1, T]$ the integers $\lbrace 1, 2, \hdots, T\rbrace$. 
The pandemic spread on each territory $d$ is thus characterized by both a reproduction number time series and an \textit{outlier} time series $\lbrace (R_t^{(d)},O_t^{(d)}), \, t \in [1, T] \rbrace$, 
that must thus be estimated jointly for all territories.

These considerations lead to a first key contribution of the present work:  the modeling of corrupted daily infection counts across territories by extending Model~\eqref{equ:cori} to multivariate daily new infection Poisson counts, each with an instantaneous Poisson parameter $p_t^{(d)}$,  
with $(\Phi Z)_t^{(d)} = \sum_{u = 1}^{\tau_\Phi} \Phi_u Z_{t-u}^{(d)}$: 
\begin{align}
\label{equ:coriO_spat}
 \PP & \left(Z_t^{(d)} \vert Z_{1:t-1}^{(d)}; R_{t}^{(d)}, O_{t}^{(d)}\right)\\
\nonumber & \sim \mathrm{Poiss}\left(p^{(d)}_t \triangleq  R_t^{(d)} \times (\Phi Z)_t^{(d)}  + O_t^{(d)} \right).
\end{align}

\subsection{Crafting the convex nonsmooth estimation functional} 
\label{ssec:multiple}

The goal is now to estimate jointly  $R = \lbrace R_t^{(d)}, \, t \in [1, T],  d \in [1, D] \rbrace$ and $O = \lbrace O_t^{(d)}, \, t \in [1, T],  d \in [1, D] \rbrace$ from $\lbrace Z_t^{(d)}, \, t \in [1, T], \,  d \in [1, D] \rbrace$. 
The next original contribution is thus to frame the estimation into an inverse problem strategy, with a careful crafting of the functional to minimize combining the negative log-likelihood of \textit{extended} Model~\eqref{equ:coriO_spat}
with positiveness and regularity constraints in time and space on $R$ and structure on $O$. 

\subsubsection{Negative log-likelihood function} 
\label{ssec:}

The negative log-likelihood associated to the \textit{extended} Poisson Model~\eqref{equ:coriO_spat} is defined in a separable manner as 
\begin{align}
\label{eq:DKL_outlier}
&F(R, O \lvert Z) \triangleq \sum_{d = 1}^{D} \sum_{t = 1}^T f(R_t^{(d)}, O_t^{(d)} \lvert Z_t^{(d)}, (\Phi Z)_t^{(d)}), \\
&\text{with} \, \,  \, f(r, o \lvert z, (\Phi z)) \triangleq 
\begin{cases}
\iota_{\lbrace (0,0) \rbrace}(r, o) \, \, \, \text{if} \,\, z =  \Phi z  = 0, \\
d_{\mathtt{KL}}(z \lvert r \times (\Phi z) + o) \, \, \,\text{otherwise.} \nonumber
\end{cases}
\end{align}
The indicator function of the singleton $\lbrace (0,0) \rbrace$ permits to force the (MLE) estimates of $R$ and $O$ to vanish when the pandemic stops, i.e.,  when $\left(\Phi Z\right)_{t}^{(d)} =Z_t^{(d)} = 0$,
\begin{align}
\label{eq:def_indicator}
\iota_{\lbrace (0,0) \rbrace}(r, o) = \begin{cases}
0 & \text{if} \quad r =o = 0 \\
\infty & \text{otherwise,}
\end{cases}
\end{align}
while the standard Kullback-Leibler divergence (KLD) is defined as $D_{\mathtt{KL}}(Z\lvert P) = \sum_{t = 1}^T\sum_{d=1}^D d_{\mathtt{KL}}(Z_t^{(d)} \lvert P_t^{(d)})$, with
\begin{align}
\label{eq:def_DKL}
d_{\mathtt{KL}}(z \lvert p) \triangleq &\begin{cases}
z \ln \frac{z}{p} + p - z \quad &\text{if} \, \,  z > 0,  \,p > 0\\
p \quad &\text{if} \, \, z = 0, \,  p \geq 0\\
\infty \quad &\text{else.} 
\end{cases}
\end{align}

\subsubsection{Regularity and positivity constraints on $R$} 
\label{ssec:constr_R}

Unlike Model~\eqref{equ:cori}, from which a unique MLE estimate can be derived (Eq.~\eqref{eq:CoriMLE}), for Model~\eqref{equ:coriO_spat} the \textit{extended} likelihood (Eq.~\eqref{eq:DKL_outlier}) is no longer strictly convex, thus does not have a \textit{unique} maximum, and hence requires additional constraints on the estimates of $R$ and $O$.

\noindent \textit{i) Time regularity:}  By nature and to be of any potential use for actual pandemic monitoring, the estimated reproduction numbers for a given territory must vary only slowly and smoothly along time. 
Following~\citep{abry2020spatial}, it has been chosen to enforce piecewise linear time evolution of the estimate of the reproduction numbers. 
Such a property ensures that, besides an estimate of $R$ at day $t$, one also gets an estimation of a local trend, assessing whether the pandemic is growing or decreasing. 
Following~\citep{colas2019,Debarre:2020vi},  to favor piecewise linear temporal evolution of the estimates, we penalize the $\ell_1$-norm of the multivariate time-domain Laplacian $\boldsymbol{\mathrm{D}}_2 : \mathbb{R}^{D \times T}  \rightarrow  \mathbb{R}^{D \times (T-2)}$ of the estimates of $R$:
\begin{align}
\label{eq:cons_R_spat}
\lVert \boldsymbol{\mathrm{D}}_2 R\rVert_1 \triangleq \sum_{d = 1}^{D} \sum_{t = 2}^{T-1}\lvert (\boldsymbol{\mathrm{D}}_2 R)_t^{(d)} \rvert,
\end{align}
with $(\boldsymbol{\mathrm{D}}_2 R)_t^{(d)} \triangleq \frac{1}{2}  R^{(d)}_{t-1} - R^{(d)}_t + \frac{1}{2} R^{(d)}_{t+1}$.

\noindent \textit{ii) Positivity:}  $R$ is by nature positive.
Yet, Eq.~\eqref{eq:DKL_outlier} indicates that, per se, the proposed \textit{extended} negative log-likelihood ensures the positivity of $p_t^{(d)} = R_t^{(d)} \times (\Phi Z)_{t}^{(d)}  + O_t^{(d)}$ but not of $R_t^{(d)}$ itself. 
Thus, to enforce the positivity of $R_t^{(d)}$, an indicator function is added in the constraints:
\begin{align}
\label{eq:cons_R_spat_iota}
\iota_{\geq 0}(R)
\triangleq \left\lbrace 
\begin{array}{cc}
\infty & \text{if} \,  \exists \, t \in [1, T], \, d \in [1, D], \,  \text{s.t.} \, R^{(d)}_t < 0\\
0 & \text{otherwise.}
\end{array}
\right. 
\end{align}

\noindent \textit{iii) Spatial regularity:}  When different territories $d$ are connected, it is natural to expect that their pandemic dynamics are correlated and thus that the corresponding estimates of $R$ show some \textit{spatial} regularity.
Following~\citep{abry2020spatial}, we use a graph, with $D$ vertices corresponding to the different territories and $E$ edges to connections between pairs of territories.
We favor piecewise constant spatial estimates of $\hat{R}_t^{1:D}$ across territories by penalizing the \textit{Graph Total Variation}, defined as: 
\begin{align}
\label{eq:def_GTV}
\mathrm{GTV}(R) =\sum_{t = 1}^T \sum_{d_1 \sim d_2}  \left\lvert R_t^{(d_1)} - R_t^{(d_2)} \right\rvert,
\end{align}
where $ \sum_{d_1 \sim d_2} $ runs over vertices connected by an edge.

The \textit{Graph Total Variation} can be rewritten as a composition of a discrete difference operator and a sparsity-promoting $\ell_1$-norm, $\mathrm{GTV}(R)  = \lVert \boldsymbol{\mathrm{G}} R \rVert_1$, defining  the linear operator $\boldsymbol{\mathrm{G}}$:
\begin{align}
\label{eq:def_G_spat}
\boldsymbol{\mathrm{G}} : \left\lbrace
\begin{array}{ccc}
\mathbb{R}^{D \times T} & \rightarrow & \mathbb{R}^{E \times T} \\
R & \mapsto & \left\lbrace R_t^{(d_1)} - R_t^{(d_2)}, \, t \in [1, T], \, d_1 \sim d_2 \right\rbrace.
\end{array}
\right.
\end{align}

\subsubsection{Constraints on the \textit{outliers}}
\label{ssec:constr_O}
The discussion in Section~\ref{sec-outlier} suggests that \textit{outliers} have a sparse structure: 
They occur with significant values only on specific days (sundays, day-offs,\ldots) and are essentially negligible elsewhere.   
Sparsity is classically enforced in inverse problem settings by constraining the $\ell_1$-norm.
We thus further propose to impose the following constraint to the estimates of the \textit{outliers}:
\begin{align}
\label{eq:cons_O}
\Vert O \Vert_1 \triangleq \sum_{d = 1}^{D}\sum_{t = 1}^T \lvert O_t^{(d)} \rvert.
\end{align}

\subsubsection{Estimation as an optimization problem}
\label{ssec:}

Combining the data-fidelity term of Eq.~\eqref{eq:DKL_outlier} and the penalizations of Eqs.~\eqref{eq:cons_R_spat}, \eqref{eq:cons_R_spat_iota}, \eqref{eq:def_GTV}, and \eqref{eq:cons_O}, we propose to obtain the estimates of $R$ and $O$ by minimizing:
\begin{align}
\label{eq:penal_KL}
\nonumber (\hat{R}, \hat{O}) \in \underset{R, O \in \mathbb{R}^{D \times T}}{\mathrm{Argmin}}\, &F(R, O \lvert Z) + \lambda_{\mathrm{T}}\lVert \boldsymbol{\mathrm{D}}_2 R \rVert_1 + \iota_{\geq 0}(R) \\
&+ \lambda_{\mathrm{S}} \lVert \boldsymbol{\mathrm{G}} R \rVert_1 + \lambda_{\mathrm{O}} \Vert O \Vert_1, 
\end{align}
with $\lambda_{\mathrm{T}} > 0$, $\lambda_{\mathrm{S}} > 0$ and $\lambda_{\mathrm{O}} > 0$, \textit{regularization hyperparameters}, balancing the strengths of the different constraints one against the others and against the likelihood. 

\subsubsection{Reformulation of the optimization problem}
\label{ssec:reform}
To study the theoretical properties of Problem~\eqref{eq:penal_KL}, and to derive a minimization algorithm, it is useful to recast Eq.~\eqref{eq:penal_KL} into a generic formulation:
\begin{align}
\label{eq:gen_var}
( \hat{R}, \hat{O}) = \underset{R, O \in \mathbb{R}^{D \times T}}{\mathrm{Argmin}} \, F(R,O \lvert Z) + H(\boldsymbol{\mathrm{L}}(R, O)), 
\end{align}
with 
$\boldsymbol{\mathrm{L}}$ a linear operator and $H$ a function enforcing constraints  promoting certain sparsity / non-negativity, defined as:
\begin{align}
\label{eq:def_L_spat}
\boldsymbol{\mathrm{L}} :  \left\lbrace
\begin{array}{ccc}
\left( \mathbb{R}^{D \times T} \right)^2 &  \rightarrow  & \mathbb{R}^{D \times (T-2)} \times \mathbb{R}^{D \times T} \times \mathbb{R}^{E \times T} \times \mathbb{R}^{D \times T} \\
(R,O)  & \mapsto & \left(   \lambda_{\mathrm{T}}\boldsymbol{\mathrm{D}}_2 R, R, \lambda_{\mathrm{S}} \boldsymbol{\mathrm{G}} R, \lambda_{\mathrm{O}} O \right),
 \end{array}
 \right.
\end{align}
\begin{align}
\label{eq:def_H_spat}
H(Q_1, Q_2, Q_3, Q_4 ) \triangleq  \lVert Q_1 \rVert_1 + \iota_{\geq 0}(Q_2) + \lVert Q_3 \rVert_1+ \lVert Q_4 \rVert_1.
\end{align}

\subsection{Convexity, existence of a minimizer and uniqueness} 
\label{ssec:convex}

Showing that Eq.~\eqref{eq:penal_KL} provides well-defined estimates of $R$ and $O$ requires to prove the existence of a minimizer of $F(\cdot) + H(\boldsymbol{\mathrm{L}} \cdot)$. 
To that aim, preliminary results are necessary.

\begin{lemma}
\label{lem:F_convex}
The extended Kullback-Leibler data-fidelity term $F$ (Eq.~\eqref{eq:DKL_outlier}) is jointly convex with respect to the pair $(R, O)$.
\end{lemma}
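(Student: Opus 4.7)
The plan is to exploit the separability of $F$ over the indices $(d,t)$ and reduce the joint convexity of $F(R,O\mid Z)$ to the joint convexity, in two scalar arguments $(r,o)$, of each summand
\[
f(r,o\mid z,\phi z) \;=\;
\begin{cases}
\iota_{\{(0,0)\}}(r,o) & \text{if } z=\phi z=0,\\
d_{\mathtt{KL}}(z\mid r\,(\phi z)+o) & \text{otherwise,}
\end{cases}
\]
where $z\ge 0$ and $\phi z\ge 0$ are data. Since $F$ is a finite sum of such functions, each depending only on a coordinate pair of $(R,O)$, joint convexity of every $f$ implies joint convexity of $F$ on the full product space $\mathbb{R}^{D\times T}\times\mathbb{R}^{D\times T}$.

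First I would dispatch the degenerate case $z=\phi z=0$: the singleton $\{(0,0)\}\subset\mathbb{R}^2$ is convex, so its indicator is a convex extended-real-valued function. In the remaining case, I would write $f$ as the composition $f(r,o)=g_z\bigl(A(r,o)\bigr)$, where $A:(r,o)\mapsto r\,(\phi z)+o$ is affine from $\mathbb{R}^2$ to $\mathbb{R}$ (note $\phi z$ is a fixed non-negative scalar), and $g_z(p):=d_{\mathtt{KL}}(z\mid p)$ is a function of the single scalar $p$. Since the composition of a convex function with an affine map is convex, it suffices to prove the univariate convexity of $g_z$.

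For $g_z$ I would argue by cases on $z$. If $z>0$, then on the open half-line $p>0$ one has $g_z(p)=z\ln(z/p)+p-z$ with $g_z''(p)=z/p^2>0$, hence $g_z$ is strictly convex on $(0,\infty)$; extending by $+\infty$ on $(-\infty,0]$ preserves convexity (and lower semicontinuity, since $g_z(p)\to+\infty$ as $p\to 0^+$). If $z=0$, then $g_z$ is the linear function $p\mapsto p$ on $[0,\infty)$ and $+\infty$ on $(-\infty,0)$, which is manifestly convex. In both sub-cases $g_z$ is a proper convex l.s.c.\ function on $\mathbb{R}$, and the previous paragraph yields convexity of $f$ on $\mathbb{R}^2$.

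The main (and only) subtlety is the correct handling of the extended-real values: one must check that the closure/extension by $+\infty$ outside the natural domain yields a convex function, and that the affine pre-image $A^{-1}(\mathrm{dom}\,g_z)$ is itself convex so that the composition is a bona fide convex function on $\mathbb{R}^2$ rather than on a smaller set. Once this is verified, summing the convex terms $f(R_t^{(d)},O_t^{(d)}\mid Z_t^{(d)},(\Phi Z)_t^{(d)})$ over $(d,t)$ yields the joint convexity of $F(R,O\mid Z)$, concluding the proof.
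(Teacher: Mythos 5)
Your proof is correct and follows essentially the same route as the paper: reduce by separability to the scalar summands, note that the degenerate case is the indicator of the convex singleton $\{(0,0)\}$, and otherwise compose the convex univariate function $d_{\mathtt{KL}}(z\mid\cdot)$ with the linear map $(r,o)\mapsto r\,(\Phi z)+o$. The only difference is that you verify the univariate convexity of $d_{\mathtt{KL}}(z\mid\cdot)$ directly (second derivative and extended-value bookkeeping) where the paper simply cites the literature for it.
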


\begin{proof}
$F$ being fully separable in time and space, it is sufficient to demonstrate the joint convexity of $f(r, o \lvert z, (\Phi z)) $ with respect to the pair $(r, o)$. 
If $z = (\Phi z) = 0$, the convexity of $f(r, o \lvert z, (\Phi z)) $ w.r.t. $(r, o)$ reduces to the convexity of the indicator function of the convex set $\lbrace (0,0)\rbrace$. 
Else, since the one-component KLD $d_{\mathtt{KL}}(z \lvert p)$ is convex with respect to its second argument~\citep{combettes2007douglas}, and $(r,o) \mapsto r \times (\Phi z)+o $ being linear, $(r,o) \mapsto f(r, o \lvert z, (\Phi z))$ is convex.
\end{proof}

\begin{corollary}
\label{cor:convex}
The penalized Kullback-Leibler functional (Eq.~\eqref{eq:penal_KL}) is jointly convex with respect to the pair $(R,O)$.
\end{corollary}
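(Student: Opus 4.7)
The plan is to write the penalized functional in Equation~\eqref{eq:penal_KL} as a sum of five terms and to verify joint convexity of each summand with respect to the pair $(R,O)$; then joint convexity of the sum follows because positive linear combinations of jointly convex functions are jointly convex.

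First, I would invoke Lemma~\ref{lem:F_convex} directly to handle the data-fidelity term $F(R,O\lvert Z)$. For the four regularization terms, the observation is that each of them depends on only one of the two variables $R$ or $O$, so it suffices to establish convexity in that single variable: a function $\varphi(R)$ that is convex in $R$ is automatically jointly convex in $(R,O)$ when viewed as $(R,O)\mapsto \varphi(R)$, since convex combinations in the $O$-coordinate are trivial, and symmetrically for functions of $O$ alone.

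Next, I would check each penalty in turn. The terms $\lVert \boldsymbol{\mathrm{D}}_2 R\rVert_1$ and $\lVert \boldsymbol{\mathrm{G}} R\rVert_1$ are compositions of the convex $\ell_1$-norm with the linear maps $\boldsymbol{\mathrm{D}}_2$ and $\boldsymbol{\mathrm{G}}$, and convexity is preserved under precomposition with a linear operator, hence both are convex in $R$. The term $\iota_{\geq 0}(R)$ is the indicator of the closed convex cone $\{R : R_t^{(d)}\geq 0 \text{ for all } t,d\}$, and indicator functions of convex sets are convex. Finally, $\lVert O\rVert_1$ is a norm, hence convex in $O$. Each of these is then jointly convex in $(R,O)$ by the preceding paragraph.

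Combining these observations, and since $\lambda_{\mathrm{T}},\lambda_{\mathrm{S}},\lambda_{\mathrm{O}}>0$, the penalized functional is a sum of jointly convex functions with non-negative weights, hence jointly convex in $(R,O)$. There is no genuine obstacle here: the only subtle point is recognizing that terms depending on a single variable lift trivially to jointly convex functions of the pair, so the result reduces to Lemma~\ref{lem:F_convex} plus elementary stability properties of convexity (sums, non-negative scaling, composition with linear maps, indicators of convex sets).
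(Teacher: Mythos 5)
Your proof is correct and follows essentially the same route as the paper: Lemma~\ref{lem:F_convex} for the data-fidelity term, plus elementary stability of convexity under sums, non-negative scaling, and precomposition with linear maps. The only cosmetic difference is that the paper packages the four penalties as the single composition $H(\boldsymbol{\mathrm{L}}(R,O))$ from the reformulation in Eq.~\eqref{eq:gen_var}, whereas you treat each term individually and make explicit the (correct) observation that a convex function of one variable lifts to a jointly convex function of the pair.
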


\begin{proof}
First, Lemma~\ref{lem:F_convex} ensures the convexity of the data-fidelity term $F$.
Then, the function $H$, appearing in 
Eq.~\eqref{eq:gen_var}, involves $\ell_1$-norms and an indicator function which are convex, thus it is convex and the composition of $H$ with the linear operator $\boldsymbol{\mathrm{L}}$ is convex as well.
As the sum of two convex terms, the penalized Kullback-Leibler functional is convex.
\end{proof}

\noindent Further, because the Kullback-Leibler data-fidelity term $F$ (Eq.~\eqref{eq:DKL_outlier}), as well as the indicator function in Eq.~\eqref{eq:cons_R_spat_iota}, may take infinite values, the domain of the convex functional in Eq.~\eqref{eq:penal_KL} is not the whole space $ \left(\mathbb{R}^{D \times T} \right)^2$, but restricts to a domain $\Omega$ of \textit{feasible} variables, described in Lemma~\ref{lem:domain_D}.

\begin{lemma}
\label{lem:domain_D}
The domain $\Omega$ of \textit{feasible} points for Problem~\eqref{eq:penal_KL} consists of pairs $(R, O)$ such that, $\forall u \in [ 1, t]$, $\forall  d \in [1, D]$
\begin{itemize}
\item[\textit{(i)}]  $R_t^{(d)} (\Phi Z )_t^{(d)} + O_t^{(d)} \geq 0$,
\item[\textit{(ii)}] if $Z_t^{(d)} = \left(\Phi Z\right)_t^{(d)} = 0$ then  $ R_t^{(d)} = O_t^{(d)} = 0$,
\item[\textit{(iii)}] $R_t^{(d)} \geq 0$.
\end{itemize}
It is convex and closed.
\end{lemma}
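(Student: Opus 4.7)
The plan is to characterize $\Omega$ as the closed convex hull of the effective domain of the objective in Eq.~\eqref{eq:penal_KL}, and to exhibit it as a finite intersection of elementary closed convex sets indexed by $(t,d)\in [1,T]\times[1,D]$. First I would note that the three $\ell_1$ terms $\lambda_{\mathrm{T}}\lVert \boldsymbol{\mathrm{D}}_2 R\rVert_1$, $\lambda_{\mathrm{S}}\lVert\boldsymbol{\mathrm{G}}R\rVert_1$ and $\lambda_{\mathrm{O}}\lVert O\rVert_1$ are everywhere finite on $(\mathbb{R}^{D\times T})^2$, so the feasibility constraints come solely from the indicator $\iota_{\geq 0}(R)$ and the extended Kullback--Leibler term $F(R,O\,|\,Z)$.

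The condition $\iota_{\geq 0}(R)<+\infty$ directly yields (iii). For $F$, I would exploit its separability across $(t,d)$ (Eq.~\eqref{eq:DKL_outlier}) and treat the scalar function $f(r,o\,|\,z,\Phi z)$ case by case. When $z=\Phi z=0$, $f$ is the indicator of the singleton $\{(0,0)\}$, which enforces $r=o=0$, giving condition (ii). In all other cases, $f=d_{\mathtt{KL}}(z\,|\,r(\Phi z)+o)$; the definition in Eq.~\eqref{eq:def_DKL} shows that $d_{\mathtt{KL}}(z\,|\,p)$ is finite only when $p\geq 0$ (and strictly $p>0$ if additionally $z>0$), which yields condition (i).

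To conclude, each of (i), (ii), (iii) describes a closed convex subset of $(\mathbb{R}^{D\times T})^2$: condition (i) defines a closed affine half-space in the pair $(R_t^{(d)},O_t^{(d)})$, condition (ii) forces $(R_t^{(d)},O_t^{(d)})$ to lie in the linear subspace $\{(0,0)\}$, and condition (iii) is the closed half-space $R_t^{(d)}\geq 0$. Being the finite intersection of these sets over all $(t,d)\in [1,T]\times [1,D]$, $\Omega$ is simultaneously convex and closed.

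The main subtlety I anticipate is the gap between strict and non-strict positivity of $p=r(\Phi z)+o$ when $z>0$: the effective domain $\{F<+\infty\}$ requires $p>0$ in that case, whereas condition (i) merely requires $p\geq 0$. I would resolve this by taking $\Omega$ to be the closure of the effective domain (equivalently, the set cut out by (i)--(iii)), observing that the boundary points where $F=+\infty$ are irrelevant for minimization but are necessary to guarantee the topological closedness claimed in the lemma. This convention will also be the natural one to invoke lower semi-continuity arguments for proving existence of minimizers in the remainder of Section~\ref{ssec:convex}.
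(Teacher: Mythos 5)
Your proof follows essentially the same route as the paper's: conditions \textit{(i)}--\textit{(ii)} encode finiteness of the separable Kullback--Leibler term, \textit{(iii)} encodes the indicator $\iota_{\geq 0}$, and $\Omega$ is recognized as a finite intersection of closed convex sets. You are in fact more careful than the paper on one point: when $z>0$ the effective domain of $d_{\mathtt{KL}}(z\,\lvert\,\cdot)$ requires $p>0$ strictly, so condition \textit{(i)} with $\geq$ really describes the \emph{closure} of that domain rather than the domain itself --- a distinction the paper's proof passes over silently and which your closure convention resolves correctly.
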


\begin{proof}
Conditions \textit{(i)} and \textit{(ii)} reflect the fact that the \textit{extended} Kullback-Leibler data-fidelity term $F$ (Eq.~\eqref{eq:DKL_outlier}) must be finite for feasible time series, while Condition~\textit{(iii)} stems from the positivity constraint enforced by the indicator function in Eq.~\eqref{eq:cons_R_spat_iota}.
Further, each Condition \textit{(i)} to \textit{(iii)} defines a closed convex set of feasible time series, the finite intersection of closed convex sets being closed and convex as well, $\Omega$ is a closed convex set.
\end{proof}

\noindent 
Then, the existence of a minimizer can be proven.

\begin{theorem}
\label{thm:minimizer}
The penalized Kullback-Leibler functional in Eq.~\eqref{eq:penal_KL} is lower bounded.
Further, 
if $(\Phi Z)_t^{(d)} > 0$ for all $t \in [1, T]$, $ d \in [1, D]$, then
the minimization problem of Eq.~\eqref{eq:penal_KL} has, a least, one solution.
\end{theorem}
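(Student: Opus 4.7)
The plan is to split the statement into two parts, handling the lower-boundedness directly from the positivity of each summand, then deducing existence from a standard argument in finite-dimensional convex analysis (properness, lower semi-continuity, convexity, coercivity).

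First, for the lower bound, I would exhibit termwise non-negativity. The single-component divergence satisfies $d_{\mathtt{KL}}(z\vert p) \geq 0$ on its domain: when $z,p>0$ this follows from $\ln(p/z)\leq p/z - 1$, i.e.\ $z\ln(z/p)+p-z \geq 0$, and the case $z=0$ is immediate since $d_{\mathtt{KL}}(0\vert p)=p\geq 0$. Together with the non-negativity of $\iota_{\{(0,0)\}}$, of $\iota_{\geq 0}$, and of the three $\ell_1$-terms, this shows that the whole penalized functional is $\geq 0$ on $(\mathbb{R}^{D\times T})^2$, hence lower bounded, with no assumption on $\Phi Z$ required.

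Next, for the existence of a minimizer under the hypothesis $(\Phi Z)_t^{(d)}>0$ for all $t,d$, I would verify the three assumptions of the direct method on $\mathbb{R}^{2DT}$. \emph{Properness:} the constant choice $R_t^{(d)}\equiv 1$, $O_t^{(d)}\equiv 0$ lies in $\Omega$ (Lemma~\ref{lem:domain_D}) because the degenerate case of condition (\textit{ii}) never occurs, and it yields a finite value of $F$ and of all regularizers. \emph{Lower semi-continuity:} $d_{\mathtt{KL}}(z\vert\cdot)$ extended by $+\infty$ outside its domain is l.s.c., $\iota_{\{(0,0)\}}$ is l.s.c.\ as the indicator of a closed set, and so are $\iota_{\geq 0}$ and all $\ell_1$-norms; composition with the linear operator $\boldsymbol{\mathrm{L}}$ preserves lower semi-continuity. \emph{Convexity:} already established in Corollary~\ref{cor:convex}.

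The main obstacle, and the only place where the assumption $(\Phi Z)_t^{(d)}>0$ enters, is coercivity on $\Omega$. Given a feasible sequence $(R^{(n)},O^{(n)})$ with $\|R^{(n)}\|+\|O^{(n)}\|\to\infty$, I would argue by dichotomy. If $\|O^{(n)}\|_1\to\infty$, the penalty $\lambda_{\mathrm{O}}\|O\|_1$ alone drives the functional to $+\infty$. Otherwise $\|O^{(n)}\|_1$ stays bounded while $\|R^{(n)}\|_\infty\to\infty$, so along a subsequence there exist indices $(t_\star,d_\star)$ with $R_{t_\star}^{(n),(d_\star)}\to +\infty$ (using $R\geq 0$ from feasibility); since $(\Phi Z)_{t_\star}^{(d_\star)}>0$ by assumption and $O_{t_\star}^{(n),(d_\star)}$ is bounded, the associated Poisson parameter $p_{t_\star}^{(n),(d_\star)}=R_{t_\star}^{(n),(d_\star)}(\Phi Z)_{t_\star}^{(d_\star)}+O_{t_\star}^{(n),(d_\star)}$ tends to $+\infty$, and $d_{\mathtt{KL}}(Z_{t_\star}^{(d_\star)}\vert p_{t_\star}^{(n),(d_\star)})$ grows linearly in $p$ (irrespective of whether $Z_{t_\star}^{(d_\star)}$ is zero or positive), forcing $F(R^{(n)},O^{(n)}\vert Z)\to+\infty$. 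Combining the two branches yields coercivity on $\Omega$, and the classical existence theorem for proper, l.s.c., convex, coercive functions on a finite-dimensional space delivers a minimizer.
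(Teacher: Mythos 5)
Your proof is correct and follows essentially the same route as the paper: termwise non-negativity gives the lower bound, and existence follows from coercivity established by the same dichotomy (if $\lVert O\rVert_1$ is large the $\ell_1$ penalty blows up; otherwise $O$ is bounded and $\lVert R\rVert_\infty\to\infty$ combined with $(\Phi Z)_t^{(d)}>0$ drives some Poisson parameter, hence the Kullback--Leibler term, to infinity). The only difference is in the packaging of the last step --- the paper minimizes over the compact set $\Omega\cap\Delta_\alpha$ and then uses coercivity to globalize, whereas you invoke the standard proper/l.s.c./convex/coercive existence theorem; your explicit appeal to lower semi-continuity is, if anything, slightly more careful than the paper's assertion of continuity of $F$ over $\Omega\cap\Delta_\alpha$.
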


\begin{proof}
Since, for feasible points, the standard Kullback-Leibler divergence, the indicator functions and the $\ell_1$-norms are nonnegative, the functional is lower bounded by zero.\\
Let $\left\lVert (R, O) \right\rVert_2 \triangleq \sqrt{\lVert R \rVert_2^2 + \lVert O \rVert_2^2}$ be the canonical norm on the product space $\left(\mathbb{R}^{D\times T} \right)^2$.
Let $\boldsymbol{1}$ (resp.  $\boldsymbol{0}$) the multivariate time series with all entries equal to 1 (resp. 0),  $\left( \boldsymbol{1}, \boldsymbol{0}\right) \in \Omega$ is a feasible point. 
Let 
\begin{align}
\mu \triangleq F(\boldsymbol{1},\boldsymbol{0} \lvert Z) + H(\boldsymbol{\mathrm{L}}(\boldsymbol{1},\boldsymbol{0})) = F(\boldsymbol{1},\boldsymbol{0} \lvert Z)\geq 0
\end{align}
As shown below,  the functional $F(\cdot \lvert Z) + H(\textbf{L} \cdot)$ is \textit{coercive},  
thus there exists $\alpha > 0$ such that
\begin{align}
\label{eq:out_Dalpha}
\left\lVert (R, O) \right\rVert_2 > \alpha \Rightarrow F(R,O \lvert Z) + H(\boldsymbol{\mathrm{L}}(R,O)) > \mu.
\end{align} 
Let $\Delta_{\alpha}$ be the closed ball of radius $\alpha$ in $\left( \mathbb{R}^{D\times T}\right)^2$ for the canonical norm $\lVert \cdot \rVert_2$.
Since the feasible set $\Omega$ is closed, the intersection $\Omega \cap \Delta_{\alpha}$ is bounded and closed, hence compact as a subset of a finite-dimensional Hilbert space.
Further, the functional $F(\cdot \lvert Z) + H(\textbf{L} \cdot)$ being continuous over $\Omega \cap \Delta_{\alpha}$ it reaches its minimum over this set at some point $( \hat{R},  \hat{O} ) \in \Omega \cap \Delta_{\alpha}$.
Then, since $\left(\boldsymbol{1},\boldsymbol{0}\right) \in \Omega\cap \Delta_{\alpha}$, it follows that $F(\hat{R},\hat{O} \lvert Z) + H(\boldsymbol{\mathrm{L}}(\hat{R},\hat{O})) \leq \mu$. 
From Eq.~\eqref{eq:out_Dalpha}, if $\left(R,O\right) \in \Omega \setminus \Delta_{\alpha}$, then one has $F(R,O \lvert Z) + H(\boldsymbol{\mathrm{L}}(R,O)) > \mu$, 
showing that the relative minimum $( \hat{R},  \hat{O} )$ on $\Omega\cap \Delta_{\alpha}$ is to be a global minimum on the whole domain of feasible points $\Omega$. To conclude the proof, we need to compute $\alpha$ satisfying~\eqref{eq:out_Dalpha}, which follows from:  
\begin{itemize}
\item[\textit{i)}] If $\lVert O\lVert_1 > C \triangleq \mu/\lambda_{\mathrm{O}}$,  then $H(\boldsymbol{\mathrm{L}}(R,O)) \geq \lambda_{\mathrm{O}} \lVert O\lVert_1 > \mu$.
\item[\textit{ii)}] Since $\forall z \geq 0$, $\lim_{p \rightarrow \infty} d_{\mathtt{KL}}(z\lvert p) = \infty$ there is a constant $c = c(Z)$ such that $D_{\mathtt{KL}}(Z\lvert Q) > \mu$ as soon as $\lVert Q\lVert_\infty > c$.
Therefore, with $c' \triangleq \min_{u,d} (\Phi Z)_t^{(d)}$, as soon as $\lVert O\lVert_1 \leq C$ and $\lVert R\lVert_\infty > (c+C)/c'$, we have $\lVert R \times (\Phi Z) + O\lVert_\infty \geq  c' \lVert R \lVert_\infty-\lVert O\lVert_\infty \geq c' \lVert R\lVert_\infty - \lVert O\lVert_1 > c$, and hence $F(R,O\lvert Z) \geq \mu$.
\end{itemize}
In finite dimension, the equivalence of the norms yields constants $C_{\mathrm{O}}$ and $C_{\mathrm{R}}$ such that, $\lVert O \rVert_2 \geq C_{\mathrm{O}} \Rightarrow \lVert O \rVert_1 \geq C$ and $\lVert R \rVert_2 \geq C_{\mathrm{R}} \Rightarrow \lVert R \rVert_{\infty} \geq (c+C)/c'$.
Finally, $\alpha \triangleq \sqrt{C_{\mathrm{O}}^2 + C_{\mathrm{R}}^2}$ satisfies~\eqref{eq:out_Dalpha}.
\end{proof}

\noindent The regularizing functional in Eq.~\eqref{eq:penal_KL} not showing strict convexity, investigating the uniqueness of its minimizer requires further analysis, beyond the scope of this work.
Nevertheless, leveraging the strict convexity of the standard KLD in Eq.~\eqref{eq:DKL_outlier},  a uniqueness property for the estimate of the Poisson parameter $p_t^{(d)}$ is derived in Proposition~\ref{prop:str_conv},  based on Lemma~\ref{lem:min_gen_str}.

\begin{lemma}[\citep{Bauschke:2011ta}, \textit{Chapter 11}]
\label{lem:min_gen_str}
Let $g : \mathbb{R}^n \rightarrow \mathbb{R}$, $h : \mathbb{R}^m \rightarrow\mathbb{R}$ two proper, lower-semicontinuous, convex functions and $\boldsymbol{\mathrm{B}} : \mathbb{R}^m \rightarrow \mathbb{R}^n$ a linear operator.
Consider the following convex minimization problem
\begin{align}
\label{eq:min_gen_str}
\hat{x} \in \underset{x \in \mathbb{R}^m}{\mathrm{Argmin}} \, g(\boldsymbol{\mathrm{B}} x ) + h( x).
\end{align}
If $g$ is strictly convex, then there exists $\hat{y}\in \mathbb{R}^n$, such that for \textit{any} minimizer $\hat{x}$ of~\eqref{eq:min_gen_str}, $\hat{y} = \boldsymbol{\mathrm{B}} \hat{x}$.
\end{lemma}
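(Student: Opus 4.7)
The plan is to prove the claim by a standard midpoint/strict-convexity argument on the set of minimizers. The goal is to show that the quantity $\boldsymbol{\mathrm{B}}\hat{x}$ does not depend on which minimizer $\hat{x}$ of Eq.~\eqref{eq:min_gen_str} one chooses; then one simply defines $\hat{y}$ to be this common value. If the argmin set is empty, the conclusion holds vacuously, so I may assume there is at least one minimizer, of value $m \triangleq \inf_x g(\boldsymbol{\mathrm{B}} x)+h(x) \in \mathbb{R}$.

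Let $\hat{x}_1$ and $\hat{x}_2$ be two minimizers, and set $\bar{x} \triangleq (\hat{x}_1 + \hat{x}_2)/2$. First I would apply convexity of $g$ and $h$ separately to obtain
\begin{align}
g(\boldsymbol{\mathrm{B}}\bar{x}) &\leq \tfrac{1}{2} g(\boldsymbol{\mathrm{B}}\hat{x}_1) + \tfrac{1}{2} g(\boldsymbol{\mathrm{B}}\hat{x}_2), \\
h(\bar{x}) &\leq \tfrac{1}{2} h(\hat{x}_1) + \tfrac{1}{2} h(\hat{x}_2),
\end{align}
using that $\boldsymbol{\mathrm{B}}\bar{x} = \tfrac{1}{2}\boldsymbol{\mathrm{B}}\hat{x}_1 + \tfrac{1}{2}\boldsymbol{\mathrm{B}}\hat{x}_2$ by linearity of $\boldsymbol{\mathrm{B}}$. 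Summing these inequalities yields $g(\boldsymbol{\mathrm{B}}\bar{x}) + h(\bar{x}) \leq m$. Since $m$ is the infimum, the reverse inequality also holds, so both convex combinations must in fact be equalities; in particular,
\begin{equation}
g(\boldsymbol{\mathrm{B}}\bar{x}) = \tfrac{1}{2} g(\boldsymbol{\mathrm{B}}\hat{x}_1) + \tfrac{1}{2} g(\boldsymbol{\mathrm{B}}\hat{x}_2).
\end{equation}

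Now I would invoke the strict convexity of $g$: if $\boldsymbol{\mathrm{B}}\hat{x}_1 \neq \boldsymbol{\mathrm{B}}\hat{x}_2$, then by definition of strict convexity the above equation would have to be a strict inequality, which is a contradiction. Hence $\boldsymbol{\mathrm{B}}\hat{x}_1 = \boldsymbol{\mathrm{B}}\hat{x}_2$. Choosing any minimizer $\hat{x}_0$ and defining $\hat{y} \triangleq \boldsymbol{\mathrm{B}}\hat{x}_0$ concludes the proof, since the above argument shows $\boldsymbol{\mathrm{B}}\hat{x} = \hat{y}$ for every minimizer $\hat{x}$.

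I do not anticipate a real obstacle here: the result is essentially a two-line consequence of strict convexity together with the linearity of $\boldsymbol{\mathrm{B}}$, which transports convex combinations of $x$'s to convex combinations of $\boldsymbol{\mathrm{B}}x$'s. The only subtlety worth flagging is that strict convexity of the composite $g \circ \boldsymbol{\mathrm{B}}$ \emph{in $x$} does not hold in general (it fails whenever $\boldsymbol{\mathrm{B}}$ has a nontrivial kernel), which is precisely why one can only guarantee uniqueness of $\boldsymbol{\mathrm{B}}\hat{x}$ rather than of $\hat{x}$ itself --- exactly the statement of the lemma, and exactly what will be used in Proposition~\ref{prop:str_conv} to pin down uniqueness of the Poisson parameter $p_t^{(d)}$ even though the pair $(R,O)$ may not be unique.
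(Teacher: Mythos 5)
Your proof is correct. The paper itself gives no proof of this lemma --- it is imported verbatim from Bauschke--Combettes with only a chapter citation --- and your midpoint argument (equality must hold in both convexity inequalities at the average of two minimizers, so strict convexity of $g$ forces $\boldsymbol{\mathrm{B}}\hat{x}_1 = \boldsymbol{\mathrm{B}}\hat{x}_2$) is exactly the standard argument behind the cited result, including the correct observation that only $\boldsymbol{\mathrm{B}}\hat{x}$, not $\hat{x}$, is pinned down when $\boldsymbol{\mathrm{B}}$ has a nontrivial kernel.
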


\begin{proposition}
\label{prop:str_conv}
Let $( \hat{R}, \hat{O})$ denote a solution of Problem~\eqref{eq:penal_KL}.
Let $\hat{P}  \triangleq  \{ \hat p_t^{(d)}, t\in \{1, \ldots T \}, \, d \in \{1,\hdots D \}\}$, then $\hat{P} $ is \textit{unique}.
\end{proposition}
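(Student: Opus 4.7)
The plan is to invoke Lemma~\ref{lem:min_gen_str}, which is tailored precisely to this kind of situation: it guarantees that the linear image $\boldsymbol{\mathrm{B}}\hat{x}$ of any minimizer $\hat{x}$ is uniquely determined whenever the outer function $g$ is strictly convex. I would therefore recast Problem~\eqref{eq:penal_KL} in the template $g(\boldsymbol{\mathrm{B}}(R, O)) + h(R, O)$ in such a way that $\boldsymbol{\mathrm{B}}(\hat{R}, \hat{O})$ coincides with the Poisson parameter $\hat{P}$ whose uniqueness is claimed.

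The natural choice is to take $x = (R, O)$ and introduce the linear operator $\boldsymbol{\mathrm{B}}: (R, O) \mapsto P$ defined componentwise by $P_t^{(d)} \triangleq R_t^{(d)} (\Phi Z)_t^{(d)} + O_t^{(d)}$. To split the data-fidelity~\eqref{eq:DKL_outlier} appropriately, I would absorb the indicator branch $\iota_{\lbrace (0,0) \rbrace}$, active at indices where $Z_t^{(d)} = (\Phi Z)_t^{(d)} = 0$, into $h$, since it constrains $(R, O)$ directly (and already forces $P_t^{(d)} = 0$ at those indices); the four regularization terms of~\eqref{eq:penal_KL} would likewise be placed in $h$. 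The remaining contributions $d_{\mathtt{KL}}(Z_t^{(d)} \lvert P_t^{(d)})$ depend on $(R, O)$ only through $P$, so they make up $g(P)$. Properness, lower semi-continuity, and convexity of $g$ and $h$ then follow from Lemma~\ref{lem:F_convex} and Corollary~\ref{cor:convex}, and the identity $F + H \circ \boldsymbol{\mathrm{L}} = g \circ \boldsymbol{\mathrm{B}} + h$ holds by construction. Applying Lemma~\ref{lem:min_gen_str} then yields the existence of a unique $\hat{P}$ such that $\hat{P} = \boldsymbol{\mathrm{B}}(\hat{R}, \hat{O})$ for every minimizer.

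The main obstacle is the verification of strict convexity of $g$ on its effective domain. The one-component divergence $d_{\mathtt{KL}}(z \lvert \cdot)$ is indeed strictly convex on $(0, \infty)$ whenever $z > 0$, but it degenerates into the linear function $p \mapsto p$ when $z = 0$. Hence $g$ is strictly convex only in the components $P_t^{(d)}$ for which $Z_t^{(d)} > 0$, and not in components with $Z_t^{(d)} = 0$ and $(\Phi Z)_t^{(d)} > 0$. A clean way around this is to restrict $\boldsymbol{\mathrm{B}}$ to the subspace spanned by the $Z_t^{(d)} > 0$ components, so that Lemma~\ref{lem:min_gen_str} applies verbatim to yield uniqueness of $\hat{P}_t^{(d)}$ at those indices; the remaining components of $\hat{P}$ are then pinned down separately by combining the feasibility requirement $P_t^{(d)} \geq 0$ with the linearity of $d_{\mathtt{KL}}(0 \lvert \cdot)$ and a midpoint-convexity argument between two candidate minimizers.
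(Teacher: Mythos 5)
Your overall strategy is exactly the paper's: factor the Kullback--Leibler part of $F$ through the linear map $\boldsymbol{\mathrm{B}}:(R,O)\mapsto R\times(\Phi Z)+O$, push the regularizers and the $\iota_{\{(0,0)\}}$ branch into $h$, and invoke Lemma~\ref{lem:min_gen_str} to get uniqueness of $\hat P=\boldsymbol{\mathrm{B}}(\hat R,\hat O)$. You are in fact more careful than the paper on one point: the paper simply asserts that $g(P)=D_{\mathtt{KL}}(Z\lvert P)$ is strictly convex whenever $Z$ and $\Phi Z$ never vanish simultaneously, whereas you correctly observe that $d_{\mathtt{KL}}(0\lvert p)=p$ is affine on $[0,\infty)$, so strict convexity genuinely fails in every component where $Z_t^{(d)}=0$ but $(\Phi Z)_t^{(d)}>0$. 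Restricting $\boldsymbol{\mathrm{B}}$ to the components with $Z_t^{(d)}>0$ and applying the lemma there is a legitimate and sharper version of the paper's argument.

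The gap is in your final sentence. For the components with $Z_t^{(d)}=0$ and $(\Phi Z)_t^{(d)}>0$, the midpoint-convexity argument you invoke does not pin down $\hat p_t^{(d)}$: along the segment joining two minimizers the objective is constant, hence every convex summand is affine on that segment, but $d_{\mathtt{KL}}(0\lvert\cdot)$ is \emph{already} affine, so this yields no constraint whatsoever on those components of $P$; and the feasibility condition $p_t^{(d)}\geq 0$ only gives an inequality, not a value. To actually conclude at those indices you would need a different mechanism (e.g., first-order optimality conditions, or an argument that the affine term $p\mapsto p$ together with the other affine-along-the-segment terms forces $p_1=p_2$), and none is supplied. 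Note that this is not a defect relative to the paper --- the paper's own proof silently assumes strict convexity at exactly the components where it fails, so its case \textit{(i)} has the same hole --- but as written your proposal does not close it either, so the uniqueness of $\hat P$ at the $Z_t^{(d)}=0$, $(\Phi Z)_t^{(d)}>0$ components remains unproven.
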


\begin{proof}
In the trivial situation when the daily counts are uniformly vanishing, i.e., $Z = \Phi Z  \equiv 0$, the \textit{unique} minimizer coincides with the \textit{unique} feasible point and consists in estimated reproduction numbers and \textit{outliers} equal to zero at any time and in any territory. \\
Otherwise,  two situations are to be considered
\begin{itemize}
\item[\textit{(i)}] \textit{If $Z$ and $ \Phi Z $ never vanish simultaneously:} 
In this case $F(R,O \lvert Z) = D_{\mathtt{KL}}(Z\lvert \boldsymbol{\mathrm{B}}(R,O))$,  with $\boldsymbol{\mathrm{B}} : ( R, O) \mapsto R \times (\Phi Z) + O$.
Further, the standard Kullback-Leibler divergence $g(P) =  D_{\mathtt{KL}}(Z\lvert P)$ is strictly convex. 
Hence,  setting $h = H(\boldsymbol{\mathrm{L}}\cdot)$, Proposition~\ref{prop:str_conv} yields the result.
\item[\textit{(ii)}] \textit{If there exist some times $u$ and territories $d$ s.t. $Z_t^{(d)} = (\Phi Z )_t^{(d)} = 0$:} For these times and territories, the estimated instant Poisson parameter is forced to cancel by the indicator function involved in Eq.~\eqref{eq:DKL_outlier}.
Then, restricting to the indices for which $Z$ and $\Phi Z$ are not vanishing simultaneously,  the data-fidelity term $F$ is strictly convex and the uniqueness of $\hat{P}$ follows 
by the same reasoning.
\end{itemize}
\end{proof}

\subsection{Minimization \textit{via} a proximal algorithm} 
\label{ssec:min_prox}

From Corollary~\ref{cor:convex}, the penalized Kullback-Leibler functional is convex, and, from Theorem~\ref{thm:minimizer}, it has at least one minimizer. 
The $\ell_1$-norm and the indicator functions in the objective function make the overall functional non-smooth, preventing from the use of standard \textit{gradient} descent.
Minimizing~\eqref{eq:penal_KL} thus requires more advanced tools, handling nondifferentiable functions, such as \textit{proximal} algorithms.

\subsubsection{Adjoint and proximal operators} 
Let $\boldsymbol{\mathrm{L}}$ a  bounded linear operator $\boldsymbol{\mathrm{L}}$,  its \textit{adjoint} operator is denoted $\boldsymbol{\mathrm{L}}^*$ and its operator norm is $\lVert \boldsymbol{\mathrm{L}} \rVert_{\mathrm{op}} = \sup\lbrace \lVert \boldsymbol{\mathrm{L}}X \rVert_2, \, \lVert X \rvert_2 \leq 1\rbrace$.
The \textit{convex conjugate} of the proper, lower-semicontinuous function $H : \mathcal{X} \rightarrow\mathbb{R}$, is defined as $H^*(Y) = \sup_{X \in \mathcal{X}} \, \langle Y, X\rangle - H(X)$~\citep{bauschke2011convex}, with  $\mathcal{X}$ a Hilbert space.
For $\tau \in \RR^+ $ 
the \textit{proximal operator} of $\tau H$ at $Y \in \mathcal{X}$ reads by definition: $\mathrm{prox}_{\tau H } (Y) \triangleq  \arg\min_{X \in \mathcal{X}} \, 1/2 \lVert X-Y \rVert^2 + \tau H(X)$.

\subsubsection{Primal-dual algorithmic scheme}

In this work, 
we chose to implement the Chambolle-Pock primal-dual minimization scheme~\citep{chambolle2011first}.
Making use of the compact reformulation in Eq.~\eqref{eq:gen_var}, we perform here an explicit particularization of the Chambolle-Pock algorithm detailed in Algorithm~\ref{alg:CP}.
From~\citep{chambolle2011first}, Algorithm~\ref{alg:CP} converges toward a minimizer of Problem~\eqref{eq:gen_var}, if the descent parameters $\tau$, $\sigma$ satisfy
\begin{align}
\label{eq:cond_cvPD}
\tau \sigma \lVert \boldsymbol{\mathrm{L}} \rVert_{\mathrm{op}}^2 < 1.
\end{align}
Therefore, the actual and practical use of Algorithm~\ref{alg:CP} and enforcement of Condition~\eqref{eq:cond_cvPD}, 
requires to evaluate, at least,  an \textit{upper bound} of $\lVert \boldsymbol{\mathrm{L}} \rVert_{\mathrm{op}}$.

\begin{proposition}
\label{prop:up_normL}
Let $\boldsymbol{\mathrm{L}}$ be defined in Eq.~\eqref{eq:def_L_spat}, then
\begin{align}
\lVert \boldsymbol{\mathrm{L}} \rVert_{\mathrm{op}}^2 \leq \max \lbrace \lambda_{\mathrm{T}}^2 \lVert \boldsymbol{\mathrm{D}}_2 \rVert_{\mathrm{op}}^2 + \lambda_{\mathrm{S}} \lVert \boldsymbol{\mathrm{G}}\rVert_{\mathrm{op}}^2 + 1, \lambda_{\mathrm{O}}^2 \rbrace.
\end{align}
\end{proposition}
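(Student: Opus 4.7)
The plan is to expand $\|\boldsymbol{\mathrm{L}}(R,O)\|_2^2$ coordinate-by-coordinate on the product space $\mathbb{R}^{D\times(T-2)}\times \mathbb{R}^{D\times T}\times \mathbb{R}^{E\times T}\times \mathbb{R}^{D\times T}$, where the canonical norm squared is the sum of the four component norms squared. Since $\boldsymbol{\mathrm{L}}(R,O) = (\lambda_{\mathrm{T}} \boldsymbol{\mathrm{D}}_2 R,\, R,\, \lambda_{\mathrm{S}} \boldsymbol{\mathrm{G}} R,\, \lambda_{\mathrm{O}} O)$, this yields
\begin{align*}
\|\boldsymbol{\mathrm{L}}(R,O)\|_2^2 = \lambda_{\mathrm{T}}^2 \|\boldsymbol{\mathrm{D}}_2 R\|_2^2 + \|R\|_2^2 + \lambda_{\mathrm{S}}^2 \|\boldsymbol{\mathrm{G}} R\|_2^2 + \lambda_{\mathrm{O}}^2 \|O\|_2^2.
\end{align*}

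Next I would apply the definition of the operator norm to the two linear operators appearing in the first and third terms, bounding $\|\boldsymbol{\mathrm{D}}_2 R\|_2^2 \leq \|\boldsymbol{\mathrm{D}}_2\|_{\mathrm{op}}^2 \|R\|_2^2$ and $\|\boldsymbol{\mathrm{G}} R\|_2^2 \leq \|\boldsymbol{\mathrm{G}}\|_{\mathrm{op}}^2 \|R\|_2^2$. Regrouping terms by their dependence on $R$ and $O$ separately, this gives
\begin{align*}
\|\boldsymbol{\mathrm{L}}(R,O)\|_2^2 \leq A \|R\|_2^2 + B \|O\|_2^2,
\end{align*}
with $A \triangleq \lambda_{\mathrm{T}}^2 \|\boldsymbol{\mathrm{D}}_2\|_{\mathrm{op}}^2 + \lambda_{\mathrm{S}}^2 \|\boldsymbol{\mathrm{G}}\|_{\mathrm{op}}^2 + 1$ and $B \triangleq \lambda_{\mathrm{O}}^2$.

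Finally, I would use the elementary inequality $A\|R\|_2^2 + B\|O\|_2^2 \leq \max\{A,B\}(\|R\|_2^2 + \|O\|_2^2) = \max\{A,B\}\, \|(R,O)\|_2^2$, and take the supremum over $(R,O)$ with $\|(R,O)\|_2 \leq 1$ to conclude that $\|\boldsymbol{\mathrm{L}}\|_{\mathrm{op}}^2 \leq \max\{A, B\}$, which is the claimed bound (up to the apparent typographical omission of the square on $\lambda_{\mathrm{S}}$ in the statement). There is no real obstacle here: the argument is a straightforward combination of product-space orthogonality, subadditivity of operator norms, and the fact that a convex combination is controlled by the maximum of the coefficients. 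The only delicate point is handling the product-space geometry correctly, making sure the norms on the four components add in quadrature and that the unconstrained $R$ component in the second slot contributes the additive $+1$ in $A$.
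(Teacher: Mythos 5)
Your proof is correct and follows essentially the same route as the paper's: expand the squared norm over the four product-space components, absorb the $\boldsymbol{\mathrm{D}}_2$ and $\boldsymbol{\mathrm{G}}$ terms via their operator norms into a single coefficient on $\lVert R \rVert_2^2$, and bound the resulting weighted sum by the maximum of the two coefficients times $\lVert (R,O) \rVert_2^2$. You are also right that the missing square on $\lambda_{\mathrm{S}}$ in the stated bound is a typo; the paper's own derivation produces $\lambda_{\mathrm{S}}^2 \lVert \boldsymbol{\mathrm{G}} \rVert_{\mathrm{op}}^2$, exactly as you obtain.
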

\begin{proof}
By definition of the operator norm $\lVert \boldsymbol{\cdot} \rVert_{\mathrm{op}}$, one has
\begin{align}
&\lVert\boldsymbol{\mathrm{L}}(R,O )\rVert_2^2\\
&= \lambda_{\mathrm{T}}^2 \lVert \boldsymbol{\mathrm{D}}_2 R \rVert_2^2 + \lVert R \rVert_2^2+ \lambda_{\mathrm{S}}^2 \lVert \boldsymbol{\mathrm{G}} R \rVert_2^2+\lambda_{\mathrm{O}}^2 \lVert O \rVert_2^2 \nonumber \\
& \leq  ( \lambda_{\mathrm{T}}^2\lVert \boldsymbol{\mathrm{D}}_2 \rVert_{\mathrm{op}}^2 + 1 +  \lambda_{\mathrm{S}}^2\lVert \boldsymbol{\mathrm{G}} \rVert_{\mathrm{op}}^2)\lVert R \rVert_2^2+\lambda_{\mathrm{O}}^2 \lVert O \rVert_2^2\nonumber\\
&\leq \max \lbrace \lambda_{\mathrm{T}}^2 \lVert \boldsymbol{\mathrm{D}}_2 \rVert_{\mathrm{op}}^2 + 1 + \lambda_{\mathrm{S}}^2 \lVert \boldsymbol{\mathrm{G}} \rVert_{\mathrm{op}}^2, \lambda_{\mathrm{O}}^2 \rbrace \lVert (R,O) \rVert_2^2 .\nonumber 
\end{align}
\end{proof}
In practice,  equal descent step sizes are chosen,  so that they saturate Condition~\eqref{eq:cond_cvPD}, replacing the unknown $\lVert \boldsymbol{\mathrm{L}} \rVert_{\mathrm{op}}$ by its upper bound derived at Proposition~\ref{prop:up_normL}, leading to
\begin{align}
\tau = \sigma = \frac{0.99}{\sqrt{\max \lbrace \lambda_{\mathrm{T}}^2 \lVert \boldsymbol{\mathrm{D}}_2 \rVert_{\mathrm{op}}^2 + \lambda_{\mathrm{S}} \lVert \boldsymbol{\mathrm{G}}\rVert_{\mathrm{op}}^2 + 1, \lambda_{\mathrm{O}}^2 \rbrace}}.
\end{align}

\begin{algorithm}[t!]
\caption{Primal-dual minimization of the penalized Kullback-Leibler~\eqref{eq:penal_KL} for the estimation of reproduction numbers and \textit{outliers}.
\label{alg:CP}}
\begin{algorithmic}
\REQUIRE	Infection counts: ${Z} \in \mathbb{R}^{D\times T}$ and $(\Phi Z)\in \mathbb{R}^{D\times T}$
\STATE \textbf{Choose} descent parameters: $\tau, \sigma > 0$,  precision: $\varepsilon$
\STATE Max.  iterations: $k_{\max}$, length of window: $k_{\mathrm{smooth}}$
\PRINT $R^{[0]} = {Z}$,  $O^{[0]} = \boldsymbol{0}$, 
\STATE $Q^{[0]} = \boldsymbol{\mathrm{L}} (R^{[0]}, O^{[0]})$,  $\overline{R}^{[0]} = R^{[0]}$,  $\overline{O}^{[0]} = O^{[0]}$
\STATE $ \Phi_{0} = F(R^{[0]},O^{[0]} \lvert Z) + H(\boldsymbol{\mathrm{L}}(R^{[0]}, O^{[0]}) $
\STATE $\Psi_0 = \varepsilon + 1$, $\overline{\Psi}_0 = \varepsilon + 1$, $k = 0$
\WHILE{$\Psi_k \geq \epsilon$ and $k < k_{\max}$}
\STATE {\color{gray}Update the dual, primal and auxiliary variables}
\begin{align*}
Q^{[k+1]} &= \mathrm{prox}_{\sigma H^*}(Q^{[k]} + \sigma \boldsymbol{\mathrm{L}} (\overline{R}^{[k]},\overline{O}^{[k]}))\\
(R^{[k+1]}, O^{[k+1]} )  &=  \mathrm{prox}_{\tau F}((R^{[k]}, O^{[k]} ) - \tau \boldsymbol{\mathrm{L}}^* Q^{[k+1]})\\
 (\overline{R}^{[k+1]}, \overline{O}^{[k+1]} )& = 2(R^{[k+1]}, O^{[k+1]} ) - (R^{[k]}, O^{[k]} )
\end{align*}
\vspace{-5mm}
\STATE {\color{gray}Computation of the convergence criteria\color{gray}}
\begin{align}
\Phi_{k+1} &= F(R^{[k+1]},O^{[k+1]} \lvert Z) + H(\boldsymbol{\mathrm{L}}(R^{[k+1]}; O^{[k+1]})) \nonumber\\
 \Psi_{k+1} &=  \lvert \Phi_{k+1} - \Phi_k \rvert / \Phi_k  \nonumber \\
\label{eq:slid_max}\overline{\Psi}_{k+1} &= \max \left\lbrace \Psi_{\ell} \, : \,\lvert k - k_{\mathrm{smooth}} \rvert_+ \leq \ell \leq  k+1 \right\rbrace  
\end{align}
\vspace{-4mm}
\STATE $k  \leftarrow k+1$
\ENDWHILE
\end{algorithmic}
\end{algorithm}

\subsubsection{Closed-form expression of the proximal and adjoint operators}

For proximal algorithms, such as Algorithm~\ref{alg:CP}, the key ingredient to achieve an actual and fast implementation is to derive a closed-form expression for each proximal operator involved. 

\noindent \textit{Proximal operator of the regularizing function:} The dual variable update in Algorithm~\ref{alg:CP} involves the proximal operator of the convex conjugate of the penalty $H$ defined  in~\eqref{eq:def_H_spat}.
In practice, first, the proximal operator of $H$ is computed, and then, the proximal operator ofits convex conjugate $H^*$ is derived applying Moreau's identity, which states that, for $\sigma > 0$, $ \mathrm{prox}_{\sigma H^*}(X) =X - \sigma \mathrm{prox}_{H/\sigma}(X/\sigma)$. 
Then, function $H/\sigma$ being fully separable, both over its four entries and over the components of each entry, its proximal operator can be computed component-wise. 
These computations involve two standard closed-form proximal operators: \textit{i)} the proximal operator of the absolute value, known as the \textit{soft-thresholding}, $\mathrm{prox}_{\sigma \lvert \boldsymbol{\cdot} \rvert} ( q) = \max \left\lbrace \lvert q \rvert - \sigma, 0\right\rbrace q/ \lvert q \rvert$, and \textit{ii)}~the proximal operator of the indicator function, which amounts to a projection: $\mathrm{prox}_{ \iota_{\geq 0}} ( q) = \max \left\lbrace 0, q \right\rbrace$.
 
\noindent \textit{Adjoint of the linear operator:} Updating the primal variable requires first to compute $\boldsymbol{\mathrm{L}}^* : \mathbb{R}^{D \times (T-2)} \times \mathbb{R}^{D \times T} \times \mathbb{R}^{E \times T} \times \mathbb{R}^{D \times T}  \rightarrow  \left( \mathbb{R}^{D \times T} \right)^2  $,  which can be expressed in terms of the adjoint operators of $\boldsymbol{\mathrm{D}}_2$ and $\boldsymbol{\mathrm{G}}$, \textit{via}
\begin{align}
\boldsymbol{\mathrm{L}}^* :
(Q_1,Q_2,Q_3,Q_4)   \mapsto  ( \lambda_{\mathrm{T}}\boldsymbol{\mathrm{D}}^*_2 Q_1+ Q_2 + \lambda_{\mathrm{S}} \boldsymbol{\mathrm{G}}^* Q_3 ,\lambda_{\mathrm{O}} Q_4).
\end{align}

\noindent \textit{Proximal operator of the data-fidelity term:} The \textit{extended} negative log-likelihood $F$ of Eq.~\eqref{eq:DKL_outlier} being separable over the times $t \in [1, T]$ and territories $ d \in [1, D]$, its proximal operator can be computed in a component-wise manner.
Each term in~\eqref{eq:DKL_outlier} corresponds to the composition of a linear map with the standard Kullback-Leibler divergence $d_{\mathtt{KL}}(z\lvert\cdot)$, whose proximal operator writes explicitly~\citep{combettes2007douglas}:
\begin{align}
\mathrm{prox}_{\tau d_{\mathtt{KL}}(z \lvert \cdot )}(p) = ( p- \tau + \sqrt{(p-\tau)^2 + 4 \tau z})/2.
\end{align}
\begin{proposition}
For $\beta \triangleq (\Phi z)^2 + 1$,  $\textbf{a} : (r,o) \mapsto r \times (\Phi z) + o$,  
\begin{align}
\label{eq:prox_d_KL}
&\mathrm{prox}_{\tau f(\boldsymbol{\cdot}, \boldsymbol{\cdot}  \lvert z, (\Phi z))}(r, o) \\
\nonumber &= 
\left\lbrace
\begin{array}{l}
(0,0), \hfill \text{if} \, \,  z =  (\Phi z)  = 0\\
(r,o)- \textbf{a}^* \beta^{-1}  \left( \textbf{I} - \mathrm{prox}_{\tau \beta d_{\mathtt{KL}}(z \lvert \boldsymbol{\cdot}) }\right) \textbf{a} (r,o), \, \,  \, \text{otherwise}.
\end{array}
\right.
\end{align}
from which derives straightforwardly the proximal operator of the global data-fidelity function $F$.
\end{proposition}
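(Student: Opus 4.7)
The plan is to split on the two branches in the definition of $f$. In the trivial branch $z = (\Phi z) = 0$, the function $f(\cdot, \cdot \lvert z, (\Phi z))$ reduces to the indicator $\iota_{\lbrace (0,0)\rbrace}$, and its proximal operator is the orthogonal projection onto the singleton $\lbrace (0,0)\rbrace$, which returns $(0,0)$ for any input. This immediately gives the first branch of \eqref{eq:prox_d_KL}, with no further work.

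For the nontrivial branch, the idea is to recognize $f(r,o\lvert z,(\Phi z)) = g(\textbf{a}(r,o))$, where $g \triangleq d_{\mathtt{KL}}(z \lvert \cdot)$ is proper, convex and lower-semicontinuous on $\mathbb{R}$, and $\textbf{a} : \mathbb{R}^2 \to \mathbb{R}$ is the bounded linear form defined in the statement. A short computation gives the adjoint $\textbf{a}^*(q) = (q (\Phi z), q)$ and the ``tight frame'' identity $\textbf{a}\textbf{a}^* = \beta\,\textbf{I}$ on $\mathbb{R}$, with $\beta = (\Phi z)^2 + 1$. This is precisely the setting in which the proximal operator of $\tau(g \circ \textbf{a})$ admits a closed-form expression in terms of $\mathrm{prox}_{\tau\beta g}$ (see, e.g., \citep{Bauschke:2011ta,combettes2007douglas}). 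To establish it in place, I would use the subdifferential characterization: $(\tilde r, \tilde o) = \mathrm{prox}_{\tau f}(r, o)$ if and only if $(r,o) - (\tilde r, \tilde o) \in \tau\,\textbf{a}^* \partial g(\textbf{a}(\tilde r,\tilde o))$, the chain rule being applicable because $\textbf{a}$ is surjective. Writing this as $(r,o) - (\tilde r, \tilde o) = \tau\,\textbf{a}^* u$ with $u \in \partial g(\textbf{a}(\tilde r, \tilde o))$ and applying $\textbf{a}$ to both sides, the identity $\textbf{a}\textbf{a}^* = \beta\,\textbf{I}$ collapses the inclusion to the one-dimensional proximal relation $\textbf{a}(\tilde r, \tilde o) = \mathrm{prox}_{\tau\beta g}(\textbf{a}(r,o))$. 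Substituting back into the primal equation and using $\tau u = \beta^{-1}(\textbf{a}(r,o) - \mathrm{prox}_{\tau\beta g}(\textbf{a}(r,o)))$ then yields exactly the announced formula \eqref{eq:prox_d_KL}.

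The closing claim --- that the proximal operator of the global data-fidelity $F$ follows straightforwardly --- is a one-line consequence of the full separability of $F$ over the indices $(t,d)$ already emphasized above \eqref{eq:prox_d_KL}: the proximal operator of a separable sum decouples into the concatenation of the componentwise proxes, each of which is delivered by \eqref{eq:prox_d_KL}. I do not expect any serious obstacle: the only points requiring care are the correct identification of $\textbf{a}^*$, the exact value of $\beta$ in the tight-frame identity, and the qualification condition licensing the subdifferential chain rule (guaranteed here by the surjectivity of $\textbf{a}$). Everything else reduces to standard convex analysis.
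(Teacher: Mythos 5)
Your proof is correct and follows essentially the same route as the paper: both hinge on the semi-orthogonality identity $\textbf{a}\textbf{a}^* = \beta\,\textbf{I}$ and the resulting closed form for the proximal operator of a convex function composed with such a linear map. The only difference is that the paper simply invokes Proposition~3.4 of \citep{pustelnik2011parallel} at this point, whereas you re-derive that composition lemma in place via the subdifferential characterization, which is a sound (if slightly longer) way to reach the same formula.
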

\begin{proof}
Since $\textbf{a} \textbf{a}^* =  \beta >0$, the formula in Eq.~\eqref{eq:prox_d_KL} is obtained applying Proposition 3.4 of~\citep{pustelnik2011parallel}.
\end{proof}

\subsubsection{Stopping criterion}

The final step in the actual of an iterative algorithm is to choose a stopping criterion.
Classically, the convergence of Algorithm~\ref{alg:CP} is quantified by the decrease of the normalized increments of the objective functional.
Fig.~\ref{fig:conv_crit} shows that such increments (dashed blue curves) display large and erractic fluctuations, thus precluding their use to construct a reliable stopping criterion. 
Instead, and to provide a robust convergence criterion, we propose to make use of a nonlinearly smoothed normalized increments (solid black curve in Fig.~\ref{fig:conv_crit}), obtained as a maximum over a sliding window over the past $k_{\mathrm{smooth}}$ iterations, as detailed in Eq.~\eqref{eq:slid_max}.
Hence, in our concrete experiments,  we set $k_{\mathrm{smooth}} = 500$ and Algorithm~\ref{alg:CP} is stopped whenever the smoothed increment reaches a precision $\varepsilon$, set to $10^{-7}$, or when the number of iterations exceeds a maximum of $k_{\max}$ iterations, set to $10^7$.

\begin{figure}[h!]
\centerline{\includegraphics[width = .4\linewidth]{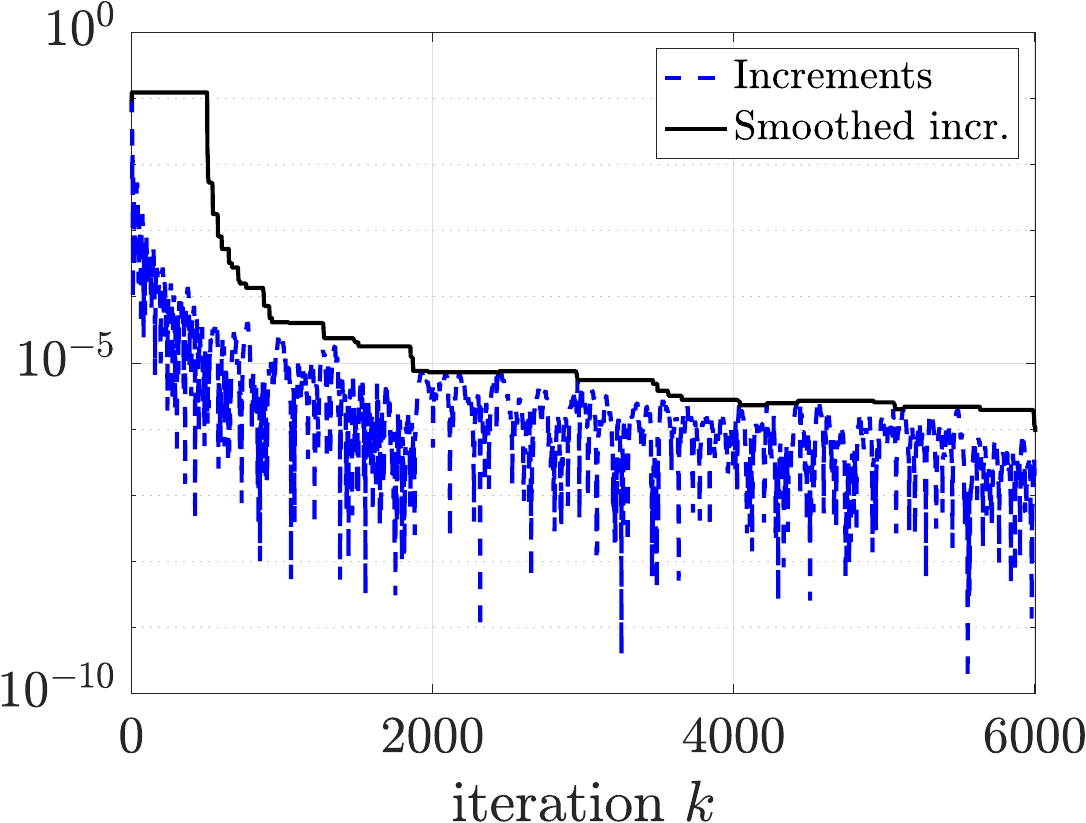}}
\caption{\label{fig:conv_crit}
\textbf{Stopping criterion.} (Normalized) Objective function increments and non linearly smoothed increments as functions of the number of iterations in Algorithm~\ref{alg:CP} (computed from new infections in France).}
\end{figure}

\section{Covid-19 data}
\label{sec:data}

To show the relevance and efficiency of the proposed procedure for the estimation of the evolution of the reproduction number of the pandemic, two real Covid-19 datasets made available by national health authorities are used, different in nature and from different international repositories. 

\subsubsection{Dataset1: Country population level daily new infection counts} \emph{Johns Hopkins University}\footnote{
\href{https://coronavirus.jhu.edu/}{https://coronavirus.jhu.edu/}}
provides access to the cumulated counts of daily new infections (together with deceased and recovered persons), at the entire population level, on a per country basis, for 200+ countries worldwide, since January 1st, 2020, hence, essentially since the earliest stages of the Covid-19 pandemic\footnote{\href{https://raw.githubusercontent.com/CSSEGISandData/COVID-19/master/csse_covid_19_data/csse_covid_19_data/csse_covid_19_time_series/}
{https://raw.githubusercontent.com/CSSEGISandData/COVID-19/master/csse$\_$covid$\_$19$\_$time$\_$series/}
}.  
In the present work, use will be made of the daily new infection counts only, $ \{ Z_{1:T}\}$, from Feb., 15th, 2020 until until August, 31st, 2021, hence $T = 586$ days of pandemic.

\subsubsection{Dataset2: French Counties hospital Level daily new infection counts} \emph{Sant\'e-Publique-France}\footnote{
\href{https://www.santepubliquefrance.fr/}{https://www.santepubliquefrance.fr/}}, the French national health authorities, provides several different datasets related to the Covid pandemic in France.
In the present work, use will be made of the daily hospital-recorded new infection counts, across France on a per \emph{d\'epartement}-basis\footnote{\href{https://www.data.gouv.fr/fr/datasets/}{https://www.data.gouv.fr/fr/datasets/}},
\emph{d\'epartements} (or counties) consisting of the usual French administrative granularity.
These new infections counts are stacked into a ${Z} \in \RR^{T \times D}$ data matrix, for the $101$ French  \emph{d\'epartements}, and estimation of the reproduction number will be conducted jointly for all \emph{d\'epartements}. 
Such data are however available only after March 19th, 2020 and analyzed until August, 31st, 2021, hence $T = 531$ days of pandemic.

\section{Robust estimation from Covid-19 data} 
\label{sec:results}

\subsection{Hyperparameter tuning} 
\label{sec:hyperparam}

One of the issue at hand for a practical use of Eq.~\ref{eq:penal_KL} to estimate jointly $R$ and $O$ lies in the selection of the hyperparameters, $\lambda_{\mathrm{T}}, \lambda_{\mathrm{S}}, $ and $\lambda_{\mathrm{O}}$ that balance the contributions of the regularizations terms one against the others as well as against the data-model fidelity term. 
Such choices have significant impacts on the achieved estimates.
Accurate parameter tuning is thus critical yet may prove time consuming notably when this selection needs to be made for each of the 200+ countries studied here, with possibly very different population sizes or pandemic intensities. 

Automated data driven selection of hyperparameters in functional minimization has been addressed theoretically in several different settings (cf. e.g. \cite{pascal2021automated})
and references therein), often relying on a Stein Unbiased Risk Estimator (SURE). 
This is working well only for large size data, which is however not the case for the Covid-19 pandemic. 

Instead, in the present work, hyperparameter selection was driven by the following dimensional analysis. 
Inspecting Eq.~\ref{eq:penal_KL} shows that the functional to minimize is covariant under the change $( Z, O ;  \lambda_{\mathrm{T}},\lambda_{\mathrm{S}},\lambda_{\mathrm{O}})$ $ \rightarrow $ $( \alpha Z, \alpha O ;  \alpha\lambda_{\mathrm{T}},\alpha\lambda_{\mathrm{S}},\lambda_{\mathrm{O}})$ for any $\alpha > 0$, thus leading to identical minima. 
This led us to propose to use one same set of hyperparameters $(\lambda_{\mathrm{T}},\lambda_{\mathrm{S}},\lambda_{\mathrm{O}})$ valid for all countries, irrespective of their  population sizes or pandemic intensities, by applying minimizations in~\ref{eq:penal_KL} to $ \alpha Z$ instead of $Z$, where $\alpha$ are arbitrary multiplicative constants, that may depend on each country.
We found empirically that setting systematically $ \alpha $ to the empirical standard deviation of the observed $Z$ constitutes an efficient way to account both for the population size and for the severity of the pandemic. 
Precisely, $ \lambda_{\mathrm{T}} = 3.5$ is chosen as a universal (for all countries) time constant, after inspections of numerous different countries over the whole pandemic period, as a valid trade-off between too small --- that would lead to too much variability in the estimation of $R$ and would prevent the use of $R$ to actually assess the strength of the pandemic --- and too large --- that would lead to too few variability hence preventing the possibility of detecting changes in the pandemic dynamics that may follow some sanitary policy decisions (lockdown, vaccination). 
Further, $ \lambda_{\mathrm{O}}  $ is set to $ 0.025 $ to enable the outlier term in the functional to be versatile enough to account for pseudo-seasonal effects modeled as sparse irrelevant measures. 
Finally, for different $D=96$ territories, the connectivity of the graph for metropolitan France contains $475$ edges, indicating that the time and space regularizations will have comparable impacts when $96  \lambda_{time }  \simeq 2\times 475  \lambda_{space } $, which frames the selection of  $ \lambda_{space } $. 
Inspection of the results lead to choose  $ \lambda_{space } = 0.002 $, thus corresponding to a mild spatial regularization.

\begin{figure}[htbp]
\centerline{
\includegraphics[width=\linewidth]{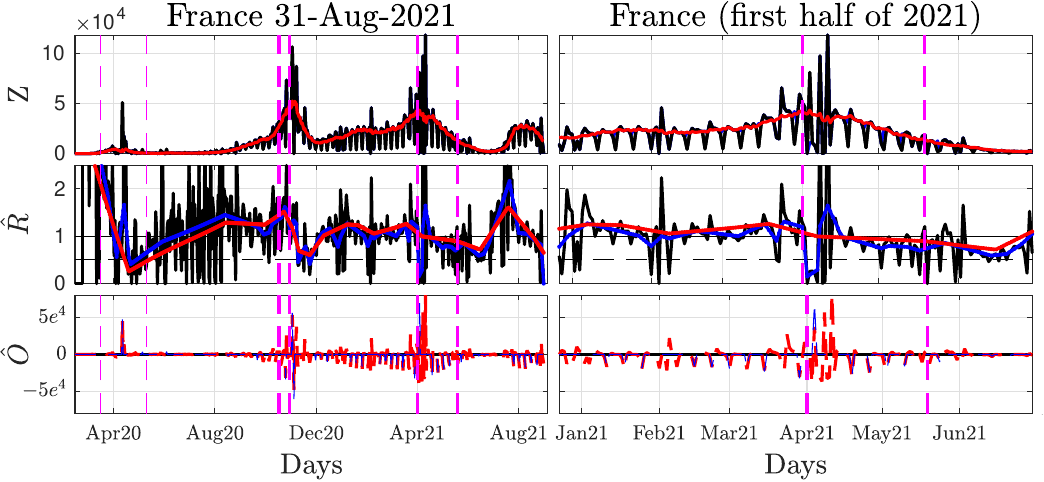}
}
\caption{{\bf France (six first months in 2021).} 
Top: $Z$ and $Z-\hat O_{O}$, 
(black and red)~;
Middle: $\hat R^{MLE}, \hat R_{O}, \hat R_{noO}$ (black, red and blue)~; 
Bottom: $\hat O_{O}, \hat O_{noO}$ (red and blue).}
\label{fig:france}
\end{figure}

\subsection{Temporal evolution for each country independently} 
\label{sec:country}

Because Public Health Policies were not or weakly coordinated across countries, even for closely intertwined spaces such as Western European countries, 
population level daily new infection counts (Dataset1, JHU) are first analyzed independently for each country, that is, without spatial regularization, or setting $\lambda_{\mathrm{S}} \equiv 0$ in \eqref{eq:penal_KL}.

Estimates of the reproduction numbers and the outliers, $\hat R_{O}, \hat O_{O}$ are thus obtained for each country by applying \eqref{eq:penal_KL} to 
daily infection counts $Z_{1:T}$, after normalisation  by standard deviations computed across the whole pandemic period, using the same hyperparameters for all countries : $(\lambda_{\mathrm{T}},\lambda_{\mathrm{S}},\lambda_{\mathrm{O}})=(3.5,0,0.025)$. 

To assess the relevance of the proposed one-step procedure, $\hat R_{O}, \hat O_{O}$ are compared against estimates obtained from a two-step procedure developed in~\cite{abry2020spatial},  $\hat R_{noO}$:
First, a  sliding-median over a 7-day window is applied to $Z_{1:T}$, and values that depart from the window median by  $\pm 2.5$ local (in-window) standard deviation are replaced by the window median, yielding estimates of outliers referred to as $\hat O_{noO}$~; 
Second, an estimate $\hat R_{noO}$ of $R$ is obtained by applying Eq.~\ref{eq:penal_KL} with $(\lambda_{\mathrm{T}},\lambda_{\mathrm{S}},\lambda_{\mathrm{O}})=(3.5, 0, +\infty)$, to the a priori denoised infection counts after normalization by standard deviation.

Fig.~\ref{fig:france} compares the estimates $ \hat R^{MLE}$, $(\hat R_{O}, \hat O_{O})$  and $(\hat R_{noO},\hat O_{noO}) $, for France across the whole pandemic period, and also focuses on a recent narrow period of time to ease illustration and analysis. 
Fig.~\ref{fig:france} shows that, for the pseudo-seasonal effect related to new infection count under-reporting due to non-working days, both procedures essentially correctly detect and estimate the corresponding outliers. 
For the subsequent working-day over-reporting performed by the French public health authority,  Fig.~\ref{fig:france} clearly shows that
the proposed one-step procedure correctly accounts for such \emph{small} outliers, thus yielding a smooth and outlier-robust estimation of $R$, while mis-reporting are missed by the two-step procedure, that thus produces estimates with numerous discontinuities irrelevant in terms of pandemic monitoring and clearly driven by these mis-reporting. 
The period of early April 2021 is of great interest, corresponding to a \emph{long week-end effect}, with a normal week-end, followed by an alternation of working and non working days, resulting in anomalously low infection counts, followed by high counts (misreport correction), during intertwined and following working days. 
The two-step procedure yields for that period totally irrelevant estimates, while the proposed one-step procedure remains satisfactorily insensitive to such count misreports. 
These pseudo-seasonal or long week-end effects are not specific to France and can be observed across numerous other countries.

Fig.~\ref{fig:countries} further compares the estimates $ \hat R^{MLE}$, $(\hat R_{O}, \hat O_{O})$  and $(\hat R_{noO},\hat O_{noO}) $ for several different countries.
Fig.~\ref{fig:countries} clearly shows that 
i) $ \hat R^{MLE}$ constitute a very irregular estimate of $R$ along time that cannot be used at all in practice~; 
ii) compared to $(\hat R_{noO}, \hat O_{noO})$,   $(\hat R_{O}, \hat O_{O})$ better accounts for both outliers and pseudo-seasonal effects and yields smooth and regular along time and outlier-robust estimations of $R$.
 $\hat R_{O} $ is thus far more likely to represent the actual, less biased and realistic assessment of the pandemic time evolution in a given country. 
Estimates $(\hat R_{O}, \hat O_{O})$ are updated on a daily basis for 200+ countries across the world, and made publicly available via interactive and animated maps (cf. Section~\ref{sec:conclusions}).

\begin{figure}[tbp]
\centerline{\includegraphics[width=\linewidth]{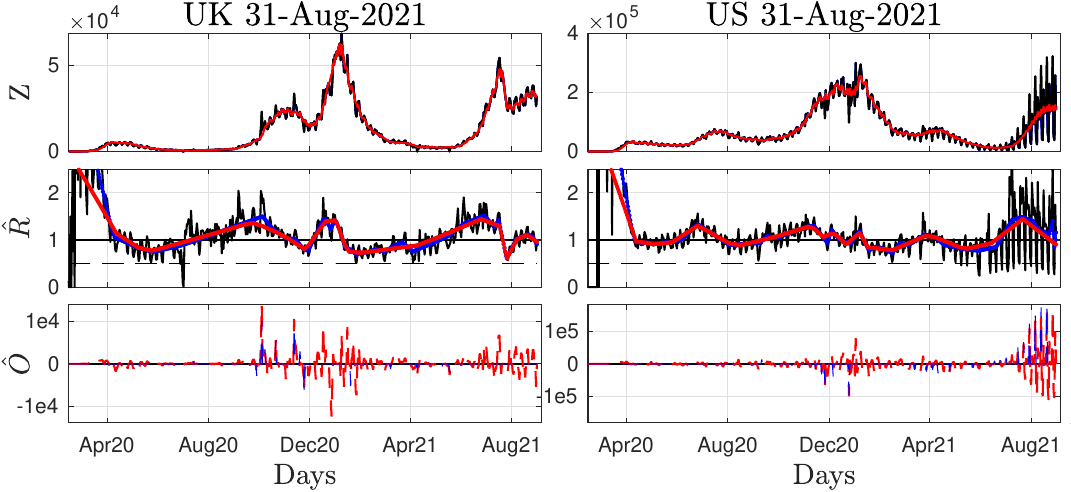}}
\centerline{\includegraphics[width=\linewidth]{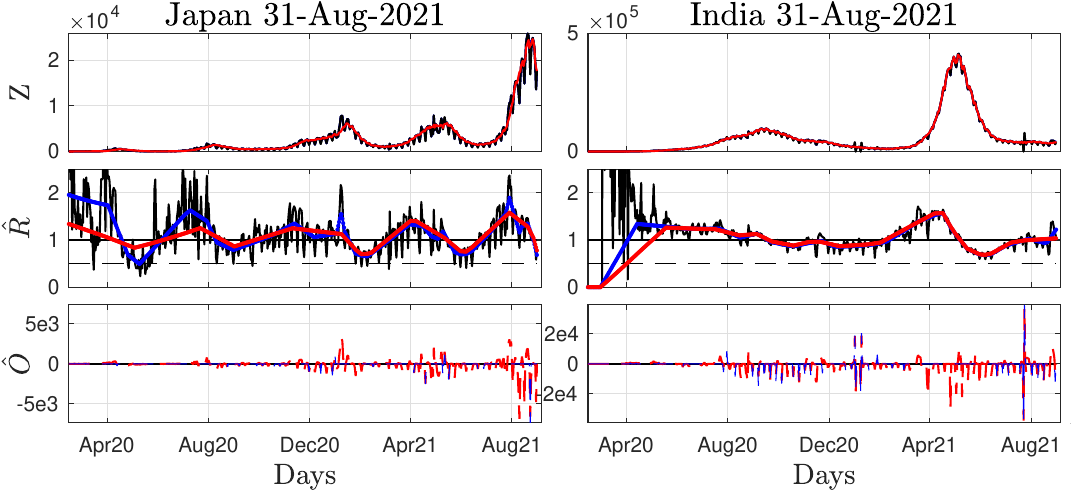}}
\centerline{\includegraphics[width=\linewidth]{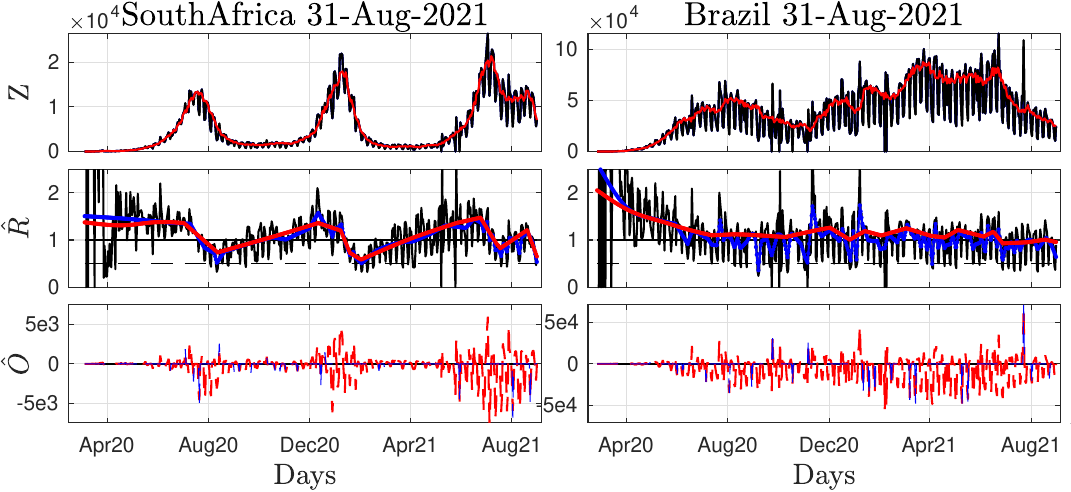}}
\caption{\label{fig:countries} {\bf Different countries.} ,Top: $Z$ and $Z-\hat O_{O}$ (black and red)~;
Middle: $(\hat R^{MLE}, \hat R_{O}, \hat R_{noO})$ (black, red and blue)~; 
Bottom: $(\hat O_{O}, \hat O_{noO})$ (red and blue).
Estimates $(\hat R_{O}, \hat O_{O})$
are updated on a daily basis for 200+ countries across the world, and made publicly available via interactive and animated maps available (cf. Section~\ref{sec:conclusions}).} 
\end{figure}

\subsection{Joint space-time Evolution for metropolitan France} 
\label{sec:france}

Let us now consider multivariate daily infections counts, $\mathbf{Z} \in \RR^{T \times D}$, for the $D=96$ \emph{d\'epartments} of metropolitan France, considered as the vertices of the graph $G$ in Eq.~\ref{eq:penal_KL}, with edges between \emph{d\'epartments}  sharing a terrestrial boundary.
Estimates $(\hat{\mathbf{R}}_{O,S}, \hat{\mathbf{O}}_{O,S}) \in \RR^{T \times D}$ are obtained by applying Eq.~\ref{eq:penal_KL} to ${Z} \in \RR^{T \times D}$, after standardization independently per \emph{d\'epartment}, 
with $( \lambda_{\mathrm{T}},\lambda_{\mathrm{S}},\lambda_{\mathrm{O}}) $ $ = (3.5, 0.002, 0.025)$ (cf. Section~\ref{sec:hyperparam}). 

To illustrate the relevance of  $(\hat{\mathbf{R}}_{O,S}, \hat{\mathbf{O}}_{O,S})$, 
they are compared against estimates obtained without spatial regularization and without accounting for outliers $(\hat{\mathbf{R}}_{noO,noS}, \hat{\mathbf{O}}_{noO,noS})$ ($( \lambda_{\mathrm{T}},\lambda_{\mathrm{S}},\lambda_{\mathrm{O}}) $ $ = (3.5, 0, +\infty)$), 
without spatial regularization but accounting for outliers  $(\hat{\mathbf{R}}_{O,noS}, \hat{\mathbf{O}}_{O,noS})$ ($( \lambda_{\mathrm{T}},\lambda_{\mathrm{S}},\lambda_{\mathrm{O}}) $ $ = (3.5, 0, 0.025)$),  
with spatial regularization but without accounting for outliers  $(\hat{\mathbf{R}}_{noO,S}, \hat{\mathbf{O}}_{noO,S})$ ($( \lambda_{\mathrm{T}},\lambda_{\mathrm{S}},\lambda_{\mathrm{O}}) $ $ = (3.5, 0.002, +\infty))$.

These estimates are first compared as functions of time for two \emph{d\'epartments} in Fig.~\ref{fig:RJ}, which shows that estimates that do not account for outliers are far too irregular for being useful in pandemic monitoring, even with spatial regularization. 
To the converse, $\hat{\mathbf{R}}_{O,S}$ that results from both accounting for outliers and spatial regularization yields very regular estimates likely to reflect a relevant assessment of the pandemic. 
The proposed estimation procedure is of particular interest when applied to territories with granularity level as is the case for the French \emph{d\'epartments}   (with on average slightly less than one million inhabitants) or during low activity phases of the pandemic, such as early summer 2020 (days 95 to 155). 
Spatial regularization is of particular relevance for theses two chosen \emph{d\'epartments}.
Indeed, \emph{d\'epartment Paris (Id=75)} corresponds to the city of Paris, thus a small territory yet with massive connection to the surrounding \emph{d\'epartments} : estimates of $R$ within Paris, can thus not notably differ from those of these so-called \emph{Paris-Crown} \emph{d\'epartments}.
The same holds for \emph{d\'epartment Rh\^one (Id=69)}, of very small surface (for historical reasons) yet acting, with the city of Lyon as a massive transit hub, for a number of surrounding \emph{d\'epartments}.

\begin{figure}[tbp]
    \centering
\centerline{\includegraphics[width=\linewidth]{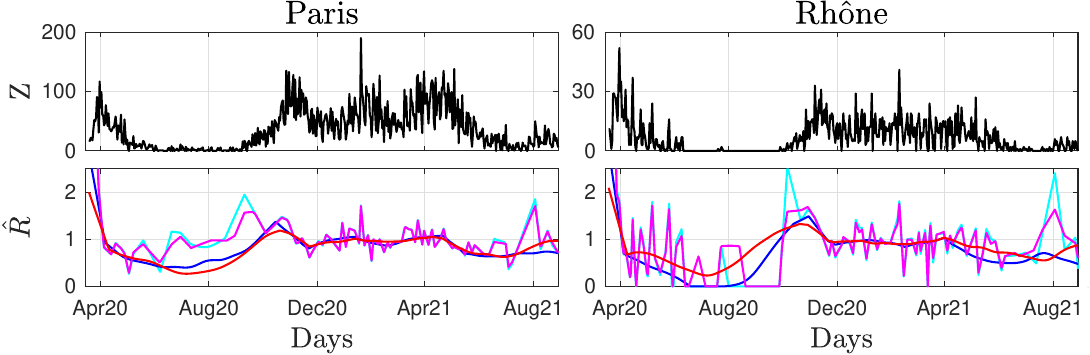}}
        \caption{\label{fig:RJ}{\bf Estimates of $R_t$ for \emph{d\'epartment Paris (Id=75)} (right) and \emph{d\'epartment Rh\^one (Id=69)} (left):}
        $\hat{\mathbf{R}}_{O,S} $ (accounting for outliers and with spatial regularization, red), $\hat{\mathbf{R}}_{noO,S} $ (not accounting for outliers but with spatial regularization, magenta), $\hat{\mathbf{R}}_{O,noS} $ (accounting for outliers but without spatial regularization, blue), $\hat{\mathbf{R}}_{noO,noS} $ (not accounting for outliers and without spatial regularization, cyan).}
\end{figure}

Further, the estimates of $R$ are for continental France are reported in Fig.~\ref{fig:RJFrance}) for three different days.
Estimates of  $\hat{\mathbf{R}}_{O,S}$ with spatial regularization and accounting for outliers (right most plots) permits a far clearer assessment of the status of the pandemic across metropolitan France.
Estimates $\hat{\mathbf{R}}_{noO,S}$ and $\hat{\mathbf{R}}_{noO,noS}$ show far too much spatial variability to be realistic estimates.
Estimates $\hat{\mathbf{R}}_{noO,S}$ improve spatial regularity, but still show significant variability induced by data low quality. 
Estimates $\hat{\mathbf{R}}_{O,S}$, with spatial regularization and accounting for outliers (right most plots), likely yield the most likely and relevant estimates each day and permit a far clearer assessment of the status of the pandemic across the connected metropolitan France \emph{d\'epartements}.
March, 31st,  corresponds to the pandemic 1st-wave maximum and $\hat{\mathbf{R}}_{O,S}$  shows a clear North-East/South-West gradient in the strength of the pandemic~; 
March, 31st, 2021 corresponds to the start of the third lockdown period, referred to as \emph{couvre-feu} (corfew) in France and $\hat{\mathbf{R}}_{O,S}$  shows that the pandemic was significantly progressing uniformly across all France~; 
August, 31st, 2021 corresponds to the time of submission of this work and shows that the pandemic is globally decreasing in France, yet non uniformly and still active in large areas.
Estimates $\hat{\mathbf{R}}_{O,S}$ are updated automatically on a daily basis for the 96 continental France  \emph{d\'epartements} and made publicly available via interactive and animated maps (cf. Section~\ref{sec:conclusions}).
 
 \begin{figure}[tbp]
    \centering
    \centerline{\includegraphics[width=1.1\linewidth]{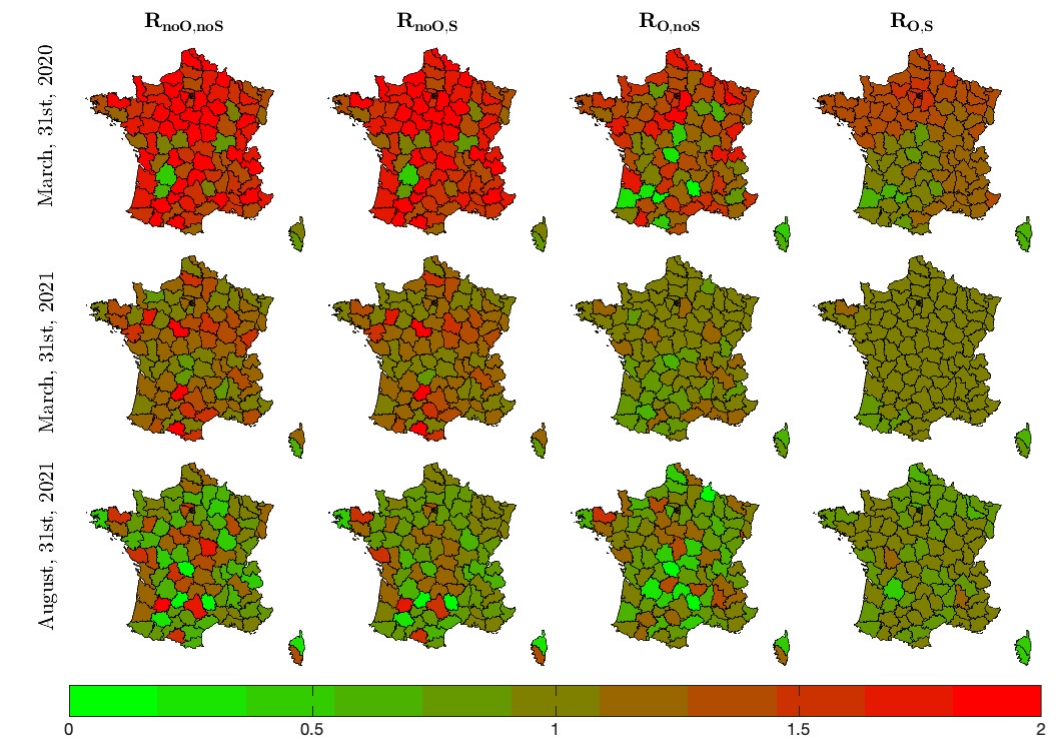}}
    \caption{    \label{fig:RJFrance}{\bf Compared estimates of $R_t^{(d)}$ across metropolitan France}  for three different days along the pandemic.} 
\end{figure}

\section{Conclusion}
\label{sec:conclusions}
The present work developed and assessed an \emph{inverse problem}-type procedure to estimate the spatio-temporal evolution of the Covid-19 reproduction number, $R$, robust to the low quality of the pandemic data.
The devised procedure relies first on a detailed analysis of the data quality that suggested that most count misreports, be they mild and seasonal (sundays, week-end, days-off) or large and random (report failure), can be efficiently accounted for as sparse outliers. 
Second, it is based on devising carefully a functional that balances a data-model fidelity term (Poisson distribution and outliers), and regularity in time and space properties that $\hat R$ needs to fulfill to be of actual relevance in practical pandemic monitoring. 
Fast and efficient algorithms were devised to minimize this functional and their convergence was studied. 
Applied to real Covid-19, the procedure was shown to yield meaningful estimates of $R$, that thus provides a relevant assessment of the spatio-temporal evolution of the pandemic that can be involved as part of a decision strategy for designing Covid-19 counter measures. 
Indeed, at a current time, the procedure outputs estimates of $R$ for the entire pandemic period, thus permitting to assess a posteriori  if and how the implementation of a  sanitary policy decision (lockdown, curfew,\ldots) impacted the evolution of the pandemic. 
Also and importantly, while the proposed procedure is not forecasting $R$, it is actually implementing a \emph{nowcasting} approach: 
Not only estimates of $R$ today are provided, but the imposed piecewise smoothness constraints provide epidemiologist with a local trend around at current time indicating whether the pandemic is progressing or regressing.  
Spatial regularization also permits to robustly assess homogeneity or heterogeneity of the pandemic across related territories, thus permitting to decide on local or global (nation-wide) measures. 

Estimates are updated automatically on a daily basis for 200+ countries worldwide and for several related territories (Counties in France, States in the USA).
They are made publicly available via downloadable text files or interactive and animated maps\footnote{Animated and interactive maps available at \url{perso.ens-lyon.fr/patrice.abry/}, \url{http://barthes.enssib.fr/coronavirus/cartes/Rmonde/} and \url{http://barthes.enssib.fr/coronavirus/cartes/RFrance/} are conceived and deployed in collaboration with Eric Guichard (ENSSIB, Triangle ENS de Lyon, France), who is here gratefully acknowledged.}.
Procedures will be made publicly available. 

Future investigations to improve estimation and monitoring include the construction of confidence intervals for the estimates and automated data-driven selection of the hyperparameters, using adaptive strategy that adjust the variations and phases of the pandemic. 

We believe that the present work constitutes a significant step forward toward the practical use of the proposed procedure permits an  actual \emph{real-time} and \emph{on-the-fly} monitoring of the Covid-19 pandemics.
These tools are ready to be put at work immediately at the outbreak of potential future pandemics. 
Finally, the automated production of these daily up-dates is also intended as a scientific tool to favor transdisciplinary scientific work against Covid-19 impact on the society at large.

 \bibliographystyle{plain}
\bibliography{CovidBiblio.bib}

\end{document}